\theoremstyle{plain}
\newtheorem{theorem}{Theorem}[section]
\newtheorem{lemma}[theorem]{Lemma}
\theoremstyle{definition}
\theoremstyle{remark}
\numberwithin{equation}{section}
\numberwithin{theorem}{section}
\def\be{\begin{equation}}
\def\ee{\end{equation}}
\def\bp{\begin{pmatrix}}
\def\ep{\end{pmatrix}}
\def\bea{\begin{eqnarray}}
\def\eea{\end{eqnarray}}
\def\\{\par\medskip}
\newcommand{\mc}[1]{{\mathcal #1}}
\newcommand{\bb}[1]{{\mathbb #1}}
\newcommand{\rme}{\mathrm{e}}
\newcommand{\rmd}{\mathrm{d}}
 \let\b=\beta    \let\g=\gamma
\let\eps=\varepsilon        
        \let\x=\xi         
\let\s=\sigma     \let\ph=\varphi
\newcommand{\id}{{1 \mskip -5mu {\rm I}}}
\renewcommand{\epsilon}{\varepsilon}
\renewcommand{\tilde}{\widetilde}
\renewcommand{\hat}{\widehat}
\newcommand{\asinh}{\mathop{\rm arcsinh}\nolimits}
\newcommand{\atanh}{\mathop{\rm arctanh}\nolimits}
\title[On large deviations of interface motions]
{On large deviations of interface motions for statistical mechanics models}
\author[L.\ Bertini]{Lorenzo Bertini}
\address{Lorenzo Bertini \hfill\break \indent
   Dipartimento di Matematica, 
   Sapienza Universit\`a di Roma 
   \hfill\break \indent
   P.le Aldo Moro 5, 00185 Roma, Italy}
 \email{bertini@mat.uniroma1.it}
\author[P.\ Butt\`a]{Paolo Butt\`a}
\address{Paolo Butt\`a\hfill\break \indent
   Dipartimento di Matematica, 
   Sapienza Universit\`a di Roma  
   \hfill\break \indent
   P.le Aldo Moro 5, 00185 Roma, Italy}
 \email{butta@mat.uniroma1.it}
\author[A.\ Pisante]{Adriano Pisante}
\address{Adriano Pisante\hfill\break \indent
Dipartimento di Matematica, 
Sapienza Universit\`a di Roma  
\hfill\break \indent
P.le Aldo Moro 5, 00185 Roma, Italy}
 \email{pisante@mat.uniroma1.it}
\begin{document}

\begin{abstract}
We discuss the sharp interface limit of the action functional associated to either the Glauber dynamics for Ising systems with Kac potentials or the Glauber+Kawasaki process. The corresponding limiting functionals, for which we provide explicit formulae of the mobility and transport coefficients, describe the large deviations asymptotics with respect to the mean curvature flow.
\end{abstract}
\maketitle
\thispagestyle{empty}

\section{Introduction}
\label{sec:1}

Consider the dynamical evolution, with non conserved order parameter, of a system undergoing a first order phase transition. A basic paradigm of statistical mechanics is that the corresponding macroscopic behavior is described by the motion by curvature of the interfaces separating the two stable phases. For lattice systems with short range interaction, the lattice symmetries are still felt on the macroscopic scale and the resulting evolution is an anisotropic motion by curvature. For values of the temperature below the roughening transition, the Wulff shape is not strictly convex and the corresponding evolution is crystalline, i.e., it generates facets \cite{TCH,T}. On the other hand, for long range interactions, the resulting interface evolution is described by the (isotropic) motion by mean curvature. We refer to \cite{funaki} for a recent overview on stochastic interface evolutions. 

In principle, the macroscopic evolution of the interfaces should be derived from a microscopic Glauber-like dynamics, and the corresponding transport coefficients could be characterized in terms of the microscopic interaction and the jump rates. While there is plenty of numerical evidence that this is indeed the case, the analytical results are few and the derivation of motion by curvature, say for the Ising model with Glauber dynamics at positive temperature, remains a most challenging issue. For short range interactions, the only available results are in fact at zero temperature \cite{CMST,LST,S}. In the case of long range interactions, or more precisely for Ising model with Kac potentials, the motion by mean curvature has been derived in \cite{DOPT1,KS}. The peculiar feature of this model is the presence of a parameter, the interaction range, that allows to achieve this derivation in two separate steps. Firstly, it is considered the evolution of the empirical magnetization in the Lebowitz-Penrose limit, showing that its limiting behavior is described by a non local evolution equation. Secondly, it is shown that, under a diffusive rescaling of space and time, such evolution leads to the motion by mean curvature. This second step is quite similar to the analogous derivation starting from the Allen-Cahn equation \cite{BSS,I}. There is another model with the same features, the so-called Glauber+Kawasaki process, for which the derivation of motion by mean curvature has been achieved by the same procedure \cite{Bo,KS0}.

The present purpose is to describe, in the sense of large deviations theory, the probability of deviations from the motion by curvature. Postponing the connection with the microscopic dynamics, let us first discuss this topic purely from a phenomenological point of view in the setting introduced in \cite{S}. On a scale large compared to the microscopic length scale, we can represent the interface between the two pure phases as a surface $\Gamma$ of codimension one embedded in $\bb R^d$. The typical evolution of $\Gamma$ can then be deduced by free energy considerations. We denote by $\tau$ the surface tension, that in general depends on the local orientation of the surface, i.e., on the local normal $\hat n$ at $\Gamma$. The surface free energy is then given by
\begin{equation}
	\label{free}
	F = \int_{\Gamma} \!\rmd\sigma\, \tau (\hat n)\;,
\end{equation}
where $\rmd \sigma$ is the surface measure. Observe that in the isotropic case $\tau$ is constant and $F$ becomes proportional to the perimeter of $\Gamma$. Phenomenologically, it is postulated that the interface velocity along the local normal, denoted by $v$, is given by
\begin{equation}
\label{mbc0}
v = -\mu \frac {\delta F}{\delta \Gamma}\;,
\end{equation}
where the \emph{mobility} $\mu$ may depend on the local orientation on the surface. As shown in \cite{S}, for short range interactions the mobility $\mu$ can be computed from the microscopic dynamics, by either a Green-Kubo formula obtained via a linear response argument, or by looking at the fluctuations of the empirical order parameter.

Let $\tilde\tau$ be the 1-homogeneous extension of $\tau$ to a function on $\bb R^d$, and introduce the stiffness matrix $A(\hat n)$ as the Hessian of $\tilde\tau$ at $\hat n$ (so that $A(\hat n) \hat n=0$). For $x\in\Gamma$ we define,
\[
\kappa_A (x) := \tau(\hat n(x))^{-1} \sum_{i=1}^{d-1} \langle
e_i(x),A(\hat n(x))e_i(x)\rangle \kappa_i(x)\;,
\]
where $\kappa_i(x)$ are the principal curvatures and $e_i(x)$ are the corresponding principal curvature directions of $\Gamma$ at $x$. Then \eqref{mbc0} reads,
\[
v = \theta \kappa_A\;,
\]
where the \emph{transport coefficient} $\theta$ is given by the Einstein relation,
\begin{equation}
\label{theta}
\theta = \mu \tau\;.
\end{equation}
In the isotropic case, $\tau$ and $\mu$ are constant, $A(\hat n) = \tau \id$ on the subspace orthogonal to $\hat n$,  hence $\kappa_A = \kappa$, the mean curvature of $\Gamma$.

Referring to \cite{S} for the analysis of (small) Gaussian fluctuations, we next introduce the rate function describing the asymptotics of the probability of large deviations around the motion by mean curvature. To this end, fix a time interval $[0,T]$ and a path $\Gamma(t)$, $t\in[0,T]$. On a basis of a Gaussian assumption on the noise and a fluctuation dissipation relation, the rate function ought to be given by
\begin{equation}
\label{sac}
 S_\mathrm{ac}(\Gamma) = \frac 1{4\mu} \int_0^T\!\rmd t\int_{\Gamma(t)}\rmd \sigma\, (v-\theta \kappa_A)^2\;. 
\end{equation}
This functional should catch the asymptotics of the probability of smooth paths, and we next discuss its extension to more general paths. As shown in \cite{KORV} in the context of the Allen-Cahn equation, the path $t\mapsto \Gamma(t)$ need not to be continuous since nucleation might occur at some intermediate times. In such cases, the appropriate rate function reads,
\begin{equation}
\label{s+s}
S(\Gamma) = S_\mathrm{ac}(\Gamma) + S_\mathrm{nucl}(\Gamma) \;,
\end{equation}
where $S_\mathrm{nucl}$ measures, according to \eqref{free}, the free energy cost of the interfaces nucleated in the time interval $[0,T]$. As we discuss in Section \ref{sec:5}, $S_\mathrm{nucl}$ can be recovered from $S_\mathrm{ac}$ by approximating nucleation events with continuous  paths. Moreover, interfaces need to be counted with their multiplicity and are not necessarily smooth, even away from the nucleation times. Suitable weak definitions of the curvature and velocity are thus needed. This  is accomplished by using tools of geometric measure theory, we refer to \cite{MR} for the proper definition of the functional $S$  in the case of non smooth interfaces in the isotropic case. Finally, it cannot be excluded that the map $t\mapsto \Gamma(t)$ has a Cantor part, which does not affect the cost functional constructed in \cite{MR}. A variational definition of $S$ which takes into account also such Cantor part is provided in \cite{BBP1}, its corresponding zero level set is given by the mean curvature flow according to the Brakke's formulation \cite{Brakke}. 

The rate functional $S$ should describe the large deviations asymptotics of microscopic stochastic dynamics that leads to the motion by curvature of the interfaces. The corresponding analysis has been carried out mostly for the Allen-Cahn evolution. In particular, the functional \eqref{s+s} has been identified by considering the sharp interface limit of the natural action functional associated to the Allen-Cahn equation, initially in \cite{KORV} and in greater detail in \cite{MR}.  A stochastic Allen-Cahn equation has been considered in \cite{BBP1}, where it is proven the large deviation upper bound with rate function $S$.  Observe that, as discussed in \cite{S}, the Allen-Cahn evolution exhibits a trivial transport coefficient, $\mu=1/\tau$, so that $\theta=1$ regardless of the shape of the double well potential. The case of Glauber dynamics for Ising systems with Kac potentials, in the one dimensional case, has been considered in \cite{BT,BDT}, where it is evaluated the asymptotic probability of a displacement of an interface in a given finite time.

Here we discuss, in the case of smooth interfaces, the derivation of the rate function $S$ by considering either the Glauber dynamics for Ising systems with Kac potentials or the Glauber+Kawasaki process. For these models the large deviations asymptotics, respectively in the Lebowitz-Penrose and in the continuum limit, has been derived in \cite{C} and in \cite{JLV}. We thus analyze the sharp interface limit of the corresponding action functionals, deducing the rate functional \eqref{sac} and providing explicit formulae for the mobility coefficients. While the basic strategy is analogous to the one in \cite{KORV}, the non local character of the action functionals requires a more clever choice of the optimizing sequences. More precisely, in order to obtain the right transport coefficient, we need to introduce a corrector in the ansatz for the recovery sequences and solve a variational problem to identify the optimal choice. In the case of the Ising model with Kac potentials, the mobility derived here agrees with that derived in \cite{B} by a linear response argument, thus validating the fluctuations dissipation assumption. The computation of the mobility for the Glauber+Kawasaki process appears instead novel and provides a dynamical characterization of the surface tension. Note indeed that, as the invariant measure of this process is not explicitly known, a static characterization according to the guidelines of equilibrium statical mechanics is not feasible. 

It would be interesting to extend the results of the present paper on the rate function $S$ to the case of general interfaces, possibly exhibiting nucleation events. In analogy with the results in \cite{MR} for the Allen-Cahn equation, a key step should be to describe in both the models considered here the asymptotic behavior of sequences $\varphi_\eps$ with equibounded action (see the equations \eqref{S1} and \eqref{Ihk2} below). In the case of Ising-Kac a compactness property is expected, in analogy with the result in \cite{AB} for time-independent sequences with equibounded free energy (see \eqref{Fe} below), yielding paths of sharp interfaces in the limit $\eps \to 0$ with uniformly bounded perimeter. However, it is unclear how to associate to such configurations $\varphi_\eps$  corresponding paths of generalized surfaces $t \mapsto \Gamma_\eps(t)$ (as varifolds in the sense of geometric measure theory) with suitable uniform curvature and velocity bounds. As a consequence, we are not able to deduce curvature and velocity bounds on the limiting interfaces. Moreover, it remains to be proven that the well-prepared sequences $\varphi_\eps$ here considered (see \eqref{recovery} and \eqref{recovery1}) actually describe the typical asymptotic behavior of configurations assuming uniquely boundedness of the action functionals.  

\section{ Glauber dynamics with Kac potentials}
\label{sec:2}

In this section we analyze the sharp interface limit of the action functional in the context of the Glauber dynamics for Ising systems with Kac potentials.

\subsection{Microscopic model and its mean field limit.} 
\label{sec:2.0}

Let $\bb T^d_L= (\bb R/ L\bb Z)^d$ be the torus of side $L\ge 1$ in $\bb R^d$; when $L=1$ we drop it from the notation, i.e., $\bb T^d = \bb T^d_1$. We denote by $r,r'$ the elements of $\bb T^d_L$ and by $\rmd r$ the Haar measure on $\bb T^d_L$. Given a smooth non-negative function $j\colon \bb R_+ \to \bb R_+$, supported in $[0,\frac 12]$ and such that $\int_{\bb R^d}\!\rmd z\; j(|z|) = 1$, we let $J\colon \bb T^d_L \to \bb R_+$ be the probability density defined by $J(r) =  j(|r|)$. In the sequel, $J*f(r) := \int_{\bb T^d_L} \!\rmd r'\, J(r-r') f(r')$ is the standard convolution on $\bb T^d_L$.

Given $L>0$, and $\gamma>0$ such that $\gamma^{-1}L \in \bb N$, let $\bb T^d_{L,\gamma} := (\gamma \bb Z / L \bb Z)^d $ be the discrete approximation of $\bb T^d_L$ with lattice spacing $\gamma$. The microscopic configuration space is $\Omega_{L,\gamma} := \{-1,1\}^{\bb T^d_{L,\gamma}}$. The microscopic energy is the function $H_\gamma \colon \Omega_{L,\gamma} \to \bb R$ defined by 
\[
H_{L,\gamma}(\s) = -\frac 12 \sum_{i,j\in \bb T^d_{L,\gamma}} \gamma^d J(i-j) \sigma(i)\sigma(j)\;.  
\]
Given the inverse temperature $\beta > 0$, the corresponding Gibbs measure $\mu_{L,\gamma}^\beta$ is the probability on $\Omega_{L,\gamma}$ defined by 
\begin{equation}
\label{gibbs}
\mu_{L,\gamma}^\beta(\sigma) = \frac 1{Z^\beta_{L,\gamma}} \exp \big\{-\beta H_{L,\gamma}(\sigma)\big\}\;,
\end{equation}
 where $Z^\beta_{L,\gamma}$ is the partition function.
 
 \subsubsection*{Lebowitz-Penrose limit}
We consider the supercritical case $\beta>1$ and define the spontaneous magnetization $m_\beta$ as the strictly positive solution of the Curie-Weiss equation, that is
\begin{equation}
\label{mbeta}
m_\beta = \tanh (\beta m_\beta)\;, \quad m_\beta>0\;.
\end{equation}
Denoting by $\mc M(\bb T^d_L)$ the space of bounded measures on the torus $\bb T^d_L$, equipped with the weak*-topology,
we define the \emph{empirical magnetization} as the map $M^\gamma\colon \Omega_{L,\gamma} \to  \mc M(\bb T^d_L)$ given by 
\[
M^\gamma(\sigma) = \gamma^d \sum_{i\in \bb T^d_{L,\gamma}} \sigma(i) \, \delta_i\; .
\]
 
As proven in \cite{EE}, in the Lebowitz-Penrose limit $\gamma\to 0$ the excess free energy functional for the Gibbs measures \eqref{gibbs} is the functional $F_L\colon L^\infty(\bb T^d_L;[-1,1]) \to [0,\infty)$ given by
\be
\label{F}
F_L(m) = \int\!\rmd r\; [f_\beta(m)-f_\beta(m_\b)] + \frac 14 \int\!\rmd r\!\int\!\rmd r'\; J(r-r') [m(r)-m(r')]^2\;,
\ee
where
\be
\label{fb}
f_\beta(m) = - \frac{m^2}2 + \b^{-1} \imath(m)\;, \quad \imath(m) = \frac{1+m}2\log\frac{1+m}2 + \frac{1-m}2\log\frac{1-m}2\;.
\ee
Observe that, since $\pm m_\beta$ are the minimizers of $f_\beta$, the functional $F_L$ vanishes on the pure phases $\pm m_\beta$. The probabilistic content of this statement is that the family $\{\mu^\gamma_{L,\gamma} \circ (M^\gamma)^{-1} \}_{\gamma>0}$ of probabilities on $ \mc M(\bb T^d_L)$ satisfies a large deviation principle with speed $\beta^{-1}\gamma^d$ and rate function $\mc F_L$ given by $\mc F_L(\nu)=F_L(m)$ if $\nu= m \, \rmd r$ for some $m \in L^\infty(\bb T^d_L;[-1,1])$ and $+\infty$ otherwise.
 
\subsubsection*{Glauber-Kac dynamics} 
The Glauber dynamics with Kac potentials is a continuous time Markov chain on the state space $\Omega_{L,\gamma}$, reversible with respect to the Gibbs measure \eqref{gibbs}. It is defined by assigning the rates at which the value of the spin $\sigma$ at site $i$ is flipped. The corresponding generator $\bb L_\gamma$ is the operator acting on functions on $\Omega_{L,\gamma}$ as
\be
\label{glauber}
\bb L_\g f(\s) = \sum_{\substack{i\in \bb T^d_{L,\gamma}}} c(i,M^\gamma(\sigma)) \rme^{-\beta  J*M^\gamma(\sigma) (i) \s(i)} [f(\s^i)-f(\s)]\;,
\ee
where $c \colon \bb T^d_L\times \mc M(\bb T^d_L) \to (0,+\infty)$ is a continuous function satisfying  $c(r, \nu) = c(r, \nu-\nu(\{r\})\delta_r)$, which implies the detailed balance condition, namely, that $\bb L_\gamma$ is self-adjoint with respect to the Gibbs measure \eqref{gibbs}.

In order to perform the sharp interface limit, we restrict to a special class of rates. More precisely, we assume that 
\begin{equation}
\label{ctype}
c(r,\nu) \equiv c(\nu)(r) = a(K*\nu(r))\;, \quad r\in \bb T^d_L\;,
\end{equation}
where $a\colon \bb R \to (0,+\infty)$ is a Lipschitz function and $K$ is a smooth radial function on $\bb R^d$ with support in the ball of radius $\frac 12$ and satisfying $K(0)=0$; i.e., $K(r) = k(|r|)$ for a smooth non-negative function $k\colon \bb R_+ \to \bb R_+$ with support in $[0,\frac 12]$. A standard choice, see \cite{DOPT1}, is 
\be
\label{c}
c(i,M^\gamma(\sigma)) = \frac{1}{2\cosh\big\{\b \sum_{j\ne i} \gamma^d J(i-j) \sigma(j)\big\}}\;,
\ee
that, provided $J(0)=0$, corresponds to $c(r,\nu) = (2\cosh \b J* \nu(r))^{-1}$.

\subsubsection*{Mean field evolution equation} 
As proven in \cite{DOPT1}, in the Lebowitz-Penrose limit (mesoscopic limit) the empirical magnetization under the Glauber dynamics becomes absolutely continuous and its density $m$  evolves according to the non-local equation,
\be
\label{mfe1}
\frac{\partial m}{\partial t} = - 2c(m) \sqrt{1-m^2} \sinh (\atanh m -\b J*m)\;,
\ee
We notice that expanding the $\sinh$, Eq.\eqref{mfe1}  reads,
\be
\label{mfe2}
\frac{\partial m}{\partial t} = 2c(m) \cosh(\b J*m) (\tanh(\b J*m) -m)\;.
\ee
In particular, with the choice \eqref{c}, the mean field evolution becomes,
\be
\label{mfee}
\frac{\partial m}{\partial t} = \tanh(\b J*m) -m\;.
\ee

The stationary solutions to \eqref{mfe1} do not depend on the particular choice of the rates. In particular, since we are assuming $\b>1$, recalling \eqref{mbeta}, the spatially homogeneous stationary solutions are $m=\pm m_\b$, that are stable, and $m=0$, which is unstable.

\subsubsection*{Action functional}
The large deviation asymptotics for the empirical magnetization under the Glauber dynamics for an Ising spin system with Kac potentials has been analyzed in \cite{C}. We next recall the associated rate function. 

Let $B_1(L)$ be the unit ball in $L^\infty(\bb T_L^d)$ equipped with the (metrizable) weak*-topology. For $T>0$ we then let $C([0,T];B_1(L))$  be the set of $B_1(L)$-valued continuous functions equipped with the induced uniform distance. Let finally $C_*([0,T];B_1(L))$ be the subset of functions $\varphi$ in $C([0,T];B_1(L))$ such that there exists $\psi \in L^1([0,T] \times \bb T_L^d)$ for which
\[
\varphi(t,r)-\varphi(0,r)=\int_0^t \psi(s,r) \, \rmd s \quad \text{$r$ - a.e.} \quad \forall\, t \in [0,T]\;.
\]
Clearly $\psi$ is unique and will be denoted by $\dot{\varphi}$. We define $I_{T,L} \colon C([0,T];B_1(L)) \to [0, \infty]$ by
 
\be
\label{I}
I_{T,L}(\varphi)= 
\begin{cases}
\displaystyle{\int_0^T\!\rmd t\int \!\rmd r\; \mc L(\ph(t, \cdot),\dot\ph(t,\cdot))} & \text{if } \varphi \in C_*([0,T];B_1(L)) \, , \\ \\
+\infty & \text{otherwise} \, ,
\end{cases}
\ee
where, given measurable functions $u \colon \bb T_L^d \to [-1,1]$ and $v \colon \bb T_L^d \to \bb R$,   
\begin{equation}
\label{mc L=}
\begin{split}
 & \mc L(u,v) = \frac v{2\beta} \log\frac{\displaystyle \tfrac{v}{2c(u)} + \sqrt{1-u^2+\tfrac{v^2}{4c(u)^2}}}{1-u} - \frac v2 J*u \\ & \quad + \frac{c(u)}\beta \Big(\cosh(\b J*u) - u \sinh(\b J*u) - \sqrt{1-u^2+\tfrac{v^2}{4c(u)^2}}\,\Big) \;.
\end{split}
\end{equation}
Under suitable assumptions on the initial conditions, in \cite{C} it is proven that the empirical magnetization sampled according to the  Glauber dynamics, regarded as a random variable taking values in the Skorokhod space $D([0,T];\mc M(\bb T^d_L))$, satisfies a large deviation principle with speed $\beta^{-1} \gamma^d$ and rate function $\mc I_{T,L}$ given by $\mc I_{T,L}(\nu) = I_{T,L}(\ph)$ if $\nu_t= \varphi_t \, \rmd r$ for some $\varphi\in C_*([0,T];B_1(L))$ and $+\infty$ otherwise. 

For our purposes, by noticing that, as $\iota'(m) = \atanh m$, the functional derivative ($L^2$-gradient) of $F_L$ is given by 
\be
\label{dF}
\frac{\delta F_L}{\delta m} = \b^{-1} \atanh m -J*m\;,
\ee
we rewrite the Lagrangian $\mc L$ in \eqref{mc L=} in the form, 
\[
\begin{split}
\mc L(u,v) & = \frac v{2\beta} \left( \atanh u  - \b J*u + \asinh \frac{v}{2c(u)\sqrt{1-u^2}} \right) \\ & \quad + \frac{c(u)}\beta \sqrt{1-u^2} \bigg(\cosh\big(\b J*u -\atanh u\big) - \sqrt{1+\tfrac{v^2}{4c(u)^2(1-u^2)}}\bigg) \\ & = \frac v2 \left(\frac{\delta F_L}{\delta u} + \frac1\beta\asinh \frac{v}{2c(u)\sqrt{1-u^2}} \right) \\ & \quad + \frac{c(u)}\beta\sqrt{1-u^2} \left(\cosh\left(\b\frac{\delta F_L}{\delta u} \right) - \sqrt{1+\tfrac{v^2}{4c(u)^2(1-u^2)}}\right) \;.
\end{split}
\]
Accordingly, the action functional becomes,
\be
\label{IS}
\begin{split}
& I_{T,L}(\varphi)= \frac 12 \big[ F_L(\ph(T)) - F_L(\ph(0))\big]  \\ & \;+ \int_0^T\!\rmd t\int\!\rmd r\; \bigg[ \frac{\dot\ph}{2\beta} \asinh W(\ph,\dot\varphi)- \frac{c(\ph)}\beta\sqrt{1-\ph^2} \big(\sqrt{1+W(\ph,\dot\ph)^2}-1\big) \bigg] \\  & \;+ \int_0^T\!\rmd t\int\!\rmd r\; \frac{c(\ph)}\beta\sqrt{1-\ph^2} \left(\cosh\left(\b\frac{\delta F_L}{\delta \varphi} \right) - 1\right)\;.
\end{split}
\ee
where
\be
\label{W}
W(\ph,\dot\varphi)= \frac{\dot\ph}{2c(\ph)\sqrt{1-\ph^2}}\;.
\ee

It is worthwhile to remark that the above representation of the action functional reflects a Legendre duality. More precisely, for $\alpha>0$ let $G(\cdot;\alpha)$ and $G^*(\cdot;\alpha)$ be the Legendre pair of convex even functions,
\begin{equation}
\label{gg*=}
G(q;\alpha) := \alpha(\cosh q -1)\;, \quad G^*(p;\alpha) = p \asinh(p/\alpha) - \sqrt{\alpha^2 +p^2} + \alpha\;,
\end{equation}
so that $qp + G(q;\alpha) + G^*(p;\alpha) \ge 0$ with equality if and only if $p=-\alpha\sinh q$. Then \eqref{IS} can be rewritten as
\begin{equation}
\label{gg*}	
\begin{split}
I_{T,L}(\varphi)& = \frac 12 \big[ F_L(\ph(T)) - F_L(\ph(0))\big] \\ & \quad + \frac 12\int_0^T\!\rmd t\int\!\rmd r\; \Big[G\big(\b\tfrac{\delta F_L}{\delta \varphi};\alpha(\ph)\big) + G^*\big(\beta^{-1}\dot\ph;\alpha(\ph)\big)\Big]\;,
\end{split}
\end{equation}
where $\alpha(\varphi)= 2 \beta^{-1} c(\varphi)\sqrt{1-\varphi^2}$. From this representation we easily conclude that the solution $m$ to the mean field equation \eqref{mfe1} is characterized by $I_{T,L}(m) = 0$, or equivalently $I_{T,L}(m) \le 0$. The last inequality provides the following gradient flow formulation: $m$ is a solution to \eqref{mfe1} if and only if, for any $t\in [0,T]$, 
\[
F_L(m(t)) + \int_0^t\!\rmd s\int\!\rmd r\; \Big[G\big(\b\tfrac{\delta F_L}{\delta m};\alpha(m)\big) + G^*\big(\beta^{-1}\dot m;\alpha(m)\big)\Big] \le F_L(m(0))\;.
\] 

\subsection{Sharp interface limit} 
\label{sec:3}

A natural and physically relevant question is to investigate the limiting behavior of the Ising-Kac model in the sharp interface limit, in which the interface between the two stable phases $\pm m_\beta$ is described by surfaces of codimension one.

\subsubsection*{Excess free energy and surface tension}
We set $\eps=L^{-1}$, and rescale the space variable $r \in \bb T^d_L$ by setting $r=\eps^{-1}x$ with $x\in \bb T^d$. We then introduce the  rescaled excess free energy renormalized with a factor $L^{d-1}$. We namely define $F^{\eps} \colon L^\infty(\bb T^d;[-1,1]) \to [0,\infty)$ by $F^\eps(m) := \eps^{d-1} F_{\eps^{-1}}(m(\eps^{-1} \cdot))$, i.e.,
\be
\label{Fe}
F^\eps(m) = \int \!\rmd x\; \frac{f_\beta(m)-f_\beta(m_\b)}{\eps} + \frac \eps 4 \int \!\rmd x\!\int \!\rmd y\; J_\eps(x-y) \left[ \frac{m(x)-m(y)}{\eps}\right]^2\;,
\ee
where $J_\eps(z) := \eps^{-d}J(\eps^{-1}z)$. The asymptotics of the excess free energy functional \eqref{Fe} has been discussed in \cite{AB,ABCP}, where it is proven that the limiting functional is finite only if $m$ takes the values $\pm m_\beta$, and in this case its value is proportional to the perimeter of the jump set of $m$. The proportionality factor defines the \emph{surface tension} of the Ising-Kac model, which is denoted by $\tau$ and will be characterized below. This result has been extended to the anisotropic case, i.e., when $J$ is not radial; then, the surface tension $\tau$ is no longer constant but a convex function of the orientation \cite{AB,BBP}.

The surface tension is the excess free energy cost per unit area of the transition between the two stable phases. The characterization of $\tau$ reduces to a one dimensional computation in the direction normal to the interface. We introduce the \emph{instanton} $\bar m(\xi)$, $\xi\in \bb R$, as the optimal magnetization profile of such a transition, that is, $\bar m$ is solution to
\begin{equation}
\label{inst}
\bar m(\xi) = \tanh \b \tilde J * \bar m (\xi)\;, \quad \bar m(0) = 0\;, \quad \lim_{\xi\to\pm\infty}\bar m(\xi) = \pm m_\b\;, 
\end{equation}
where, recalling  $J(r) =  j(|r|)$,
\begin{equation}
\label{tildeJ}	
\tilde J(\xi) = \int_{\bb R^{d-1}}\!\rmd \eta\; j\big(\sqrt{\xi^2+|\eta|^2}\big) \;.
\end{equation}
Then $\tau = \mc F (\bar m)$, where $\mc F$ is the free energy functional on $\bb R$, 
\be
\label{mcF}
 \mc F(m) = \int\!\rmd \xi\; [f_\beta(m)-f_\beta(m_\b)] + \frac 14 \int\!\rmd \xi\!\int\!\rmd \xi'\; \tilde J(\xi-\xi') [m(\xi)-m(\xi')]^2\;.
 \ee
 It can be shown \cite{B} that
 \be
 \label{tau}
 \tau = \int\!\rmd\xi \; \bar m'(\xi) \int\!\rmd\xi' \; \int_{\bb R^{d-1}}\!\rmd \eta\; j\big(\sqrt{(\xi-\xi')^2+|\eta|^2}\big) \; \bar m'(\xi') \; \frac{\eta_1^2}2\;.
 \ee
 
 For later purpose, we recall the main properties of the instanton, see \cite{DOPT2,DOPT3,DGP}. It is an odd and strictly increasing function which converges exponentially fast to its asymptotes. More precisely,  $\bar m'(\xi)>0$ and there are $a,c,\delta>0$ such that, for any $\xi\ge 0$,
 \begin{equation}
 	\label{mexp}
 	\big| \bar m(\xi) - (m_\beta-a\rme^{-\alpha\xi}) \big| + \big| \bar m'(\xi) - a\alpha \rme^{-\alpha\xi}) \big| + \big| \bar m''(\xi) - a\alpha^2 \rme^{-\alpha\xi}) \big| \le c \rme^{-(\alpha+\delta)\xi}\;,
 \end{equation}
 where $\alpha$ is the unique positive solution to the equation
 \begin{equation}
 	\label{alpha}
 	\beta(1-m_\beta^2)\int\!\rmd\xi\; \tilde J(\xi)	\rme^{-\alpha\xi}  = 1\;.
 \end{equation}
 
 \subsubsection*{Motion by mean curvature}
 Concerning the dynamical behavior, the sharp interface limit of the nonlocal evolution equation has been analyzed in \cite{DOPT,DOPT1,KS}, with the special choice of $c$ as in \eqref{c}. To describe these results, let $m$ be the solution to \eqref{mfee} and define, according to a diffusive rescaling of space and time, $m^\eps \colon \bb R_+ \times \bb T^d \to [-1,1]$ by $m^\eps(t,x) = m(\eps^{-2}t,\eps^{-1}x)$, which solves
\be
\label{mfeee}
\frac{\partial m^\eps}{\partial t} = \eps^{-2}\big(\tanh(\b J_\eps*m^\eps) -m^\eps\big)\;.
\ee 
In order to describe the limiting behavior of $m^\eps$, we briefly recall the notion of classical mean curvature flow. Given a $C^1$-family of oriented smooth surfaces $\Gamma=\{ \Gamma(t)\}_{t\geq 0}$, with $\Gamma(t) = \partial\Omega(t)$ for some open  $\Omega(t) \subset \bb T^d$,
we denote by $n_t=n_{\Gamma(t)}$ the inward normal of $\Gamma(t)$, by $v_t\colon \Gamma(t) \to \bb R$ the normal velocity of $\Gamma$ at time $t$. Finally, we set $\kappa_t=\kappa_{\Gamma(t)}$, where $\kappa_{\Gamma(t)} \colon \Gamma(t) \to \bb R$ is the mean curvature of $\Gamma(t)$. Then, given $\theta>0$, $\Gamma$ evolves according to the mean curvature flow with transport coefficient $\theta>0$ if
\be
\label{mbc}
v_t=\theta \kappa_t \, , \qquad t\geq 0 \, . 
\ee 
Given a mean curvature flow as above, assuming that the initial datum for \eqref{mfee} satisfies $m^\eps(0,\cdot) \to m_\beta\id_{\Omega(0)}-m_\beta \id_{\Omega(0)^\complement}$, then $m^\eps(t,\cdot) \to m_\beta\id_{\Omega(t)}-m_\beta \id_{\Omega(t)^\complement}$ for any $t>0$. The actual value of $\theta$ obtained in \cite{DOPT,KS} will be discussed later. 

In \cite{DOPT1,KS} the convergence to the mean curvature flow is proven also starting directly from the microscopic Glauber dynamics. More precisely, letting $M^{\gamma,\eps}$ be the diffusively rescaled empirical magnetization, it is shown that if $\eps = |\log\gamma|^{-1}$ then $M^{\gamma,\eps}$ satisfies the law of large numbers as $\gamma\to 0$, and the limiting evolution is given by the mean curvature flow.
 
\subsubsection*{Transport coefficients and Einstein relation} 
The value of the transport coefficient $\theta$, for arbitrary $c(m)$ of the form \eqref{ctype}, can be inferred by using a linear response argument along the guidelines in \cite{S}. Consider the non local mean field equation \eqref{mfe2} on $\bb R^d$ with external field $h$, that is,
\[
\frac{\partial m}{\partial t} = 2c(m) \cosh(\b(J*m+h)) [\tanh(\b (J*m+h))-m]\;.
\] 
In view of \eqref{ctype} and recalling that $J$ and $K$ are radial, solutions to the above equation with planar symmetry along a fixed direction $\hat n$ have the form $m(t,\eta) = \tilde m(\eta\cdot \hat n,t)$ with $\tilde m(\xi,t)$, $\xi\in\bb R$, solution to
\begin{equation}
\label{mfe1d0}
\frac{\partial \tilde m}{\partial t} =  2 a(\tilde K *\tilde m) \cosh(\b(\tilde J *\tilde m+h)) [\tanh(\b (\tilde J *\tilde m+h))-\tilde m]\;,
\end{equation}
where $\tilde J$ is defined in \eqref{tildeJ} and, analogously, recalling $K(r)=k(|r|)$,
\begin{equation}
\label{tildeK}	
\tilde K(\xi) = \int_{\bb R^{d-1}}\!\rmd \eta\; k\big(\sqrt{\xi^2+|\eta|^2}\big) \;.
\end{equation}
In particular, if we look for a traveling wave solution along $\hat n$, i.e., a solution of the form $m(t,\eta) = q_h(\eta\cdot \hat n-v(h)t)$, we deduce that $q_h$ and the front velocity $v(h)$ do not depend on the direction $\hat n$ and solve (in the case of \eqref{mfee} with $h$ small their existence is proven in \cite{DGP})
\begin{equation}
\label{mfe1d}
-v(h) q_h' =  2 a(\tilde K *q_h) \cosh(\b(\tilde J *q_h+h)) [\tanh(\b (\tilde J *q_h+h))-q_h]\;.
\end{equation}

In order to compute the linear response to the external field we expand,
\[
v(h) = v_1h + O(h^{2})\;,\qquad q_h = \bar m + h \psi  + O(h^{2}) \;,
\]
where $\bar m$ is the instanton which solves \eqref{mfe1d} with $h=0$ and $v(0)=0$, see \eqref{inst}. In the sequel we set 
\begin{equation}
	\label{cxi}
	\bar a(\xi) := a(\tilde K*\bar m(\xi))\:, \quad \xi\in \bb R\;.	
\end{equation}
By \eqref{mfe1d}, at the first order in $h$, we obtain the following identity,
 \[
 - v_{1} \bar m' = \frac{2\bar a}{\sqrt{1-\bar m^2}} \big[ - \psi + (1- \bar m^2) \b \tilde J * \psi + \b (1 - \bar m^2) \big]\;,
 \]
 where we used that $\cosh(\b\tilde J*\bar m) = 1/\sqrt{1-\tanh^2(\b\tilde J*\bar m)} = 1/\sqrt{1-\bar m^2}$. We multiply both sides of the above equation by $\bar m'/(2\bar a \sqrt{1-\bar m^2})$ and then integrate; using that $\bar m' = (1-\bar m^2)\b \tilde J *\bar m'$ we  obtain,
 \[ 
 v_{1} = - 2N \beta m_{\beta}\;,
 \]
 where 
 \be
 \label{N}
 N = \left[\int\!\rmd\xi\; \frac{(\bar m')^2}{2\bar a\sqrt{1-\bar m^2}}\right]^{-1}.
 \ee
 But, by the definition of the (macroscopic) mobility $\mu$, see \cite{S}, it must be $v(h) = - 2m_{\beta} \mu h + O(h^{2}) $. We conclude that 
 \be
 \label{mu}
 \mu = N \beta\;.
 \ee
 We finally remark that in the case \eqref{c} we have $2\bar a = \sqrt{1-\bar m^2}$, so that $N = \left[\int\!\rmd\xi\; \frac{(\bar m')^2}{1-\bar m^2}\right]^{-1}$ in this case.
  
\subsubsection*{Sharp interface limit of the action functional} 
The main purpose of the section is to discuss the sharp interface limit of the action functional. To this end, we perform a diffusive rescaling of space and time of parameter $\eps=L^{-1}$ and normalize the resulting action with a factor $L^{d-1}$. Namely, given $T>0$, we define $S_\eps \colon C([0,T];B_1) \to [0,\infty]$ (here $B_1$ is a short notation  for the unit ball $B_1(1)$ in $L^\infty(\bb T^d)$) by
\be
\label{S}
S_\eps(\varphi)= \eps^{d-1} I_{\eps^{-2}T,\eps^{-1}}(\ph(\eps^2 \cdot, \eps \cdot ))= \eps^{-1} \int_0^T \rmd t \int \rmd x \, \mc L_\eps (\varphi(t,\cdot), \dot{\varphi}(t,\cdot))\;,
\ee
where, given measurable functions $u \colon \bb T^d \to [-1,1]$, $v \colon \bb T^d \to \bb R$ and recalling $J_\eps(\cdot):= \eps^{-d} J(\cdot /\eps)$,
\begin{equation}
  \label{mc Leps=}
  \begin{split}
  & \mc L_\eps(u,v) = \frac {v}{2\beta} \log\frac{\displaystyle \frac{\eps^{2} v}{2c_\eps(u)} +
    \sqrt{1-u^2+\left(\frac{\eps^{2} v}{2c_\eps(u)}\right)^2}}{1-u} - \frac { v}2
  J_\eps*u \\ & \quad + \frac{c_\eps(u)}{\beta\eps^2} \left(\cosh(\b J_\eps*u) - u \sinh(\b J_\eps*u) -
    \sqrt{1-u^2+\left(\frac{\eps^{2} v}{2c_\eps(u)}\right)^2}\right)\;,
\end{split}
\end{equation}
with, recalling \eqref{ctype} and letting $K_\eps(\cdot):= \eps^{-d} K(\cdot /\eps)$,
\begin{equation}
\label{ceps}
c_\eps(u) := a(K_\eps*u)\;.
\end{equation}

Given a $C^1$-family of oriented smooth surfaces $\Gamma=\{ \Gamma(t)\}_{t\in [0,T]}$, with $\Gamma(t) = \partial\Omega(t)$ for some open  $\Omega(t) \subset \bb T^d$, as before we denote by $n_t=n_{\Gamma(t)}$ the inward normal of $\Gamma(t)$, by $v_t\colon \Gamma(t) \to \bb R$ the normal velocity of $\Gamma$ at time $t$, and by $\kappa_t$ the mean curvature of $\Gamma(t)$. Letting $\tilde d(\cdot,\Gamma(t))$ be the signed distance from $\Gamma(t)$, i.e., $\tilde d(\cdot,\Gamma(t)) := \mathrm{dist}(\cdot,\Omega(t)^\complement) -  \mathrm{dist}(\cdot, \Omega(t))$, we denote by $d(\cdot,\Gamma(t))$ a regularized version of $\tilde d(\cdot,\Gamma(t))$ such that they coincide on a neighborhood of $\Gamma(t)$. 

For such families of surfaces we consider the action functional \eqref{sac}, i.e.,
\be
\label{Rpn}
S_\mathrm{ac}(\Gamma) = \frac 1{4\mu}\int_0^T\!\rmd t\int_{\Gamma(t)}\!\rmd\s\; (v_t-\theta \kappa_t)^2\;,
\ee
with $\mu$ as given in \eqref{mu} and $\theta=\mu\tau$ with $\tau$ as defined in \eqref{tau}. As next stated, it describes the sharp interface limit of the rescaled action functional associated to the Glauber dynamics for an Ising system with Kac potentials. 

\begin{theorem}
\label{thm:3.1}
Given a $C^1$-family of oriented smooth surfaces $\Gamma=\{ \Gamma(t)\}_{t\in [0,T]}$, with $\Gamma(t) = \partial\Omega(t)$, consider sequences $\{ \varphi_\eps \} \subset C([0,T];B_1)$ converging to $m_\beta\id_{\Omega(\cdot)}-m_\beta \id_{\Omega(\cdot)^\complement}$ of the form  
\begin{equation}
\label{recovery}
\varphi_\eps(t,x) = \bar m\left(\frac{d(x,\Gamma(t))}\eps + \eps Q\left(t,x,\frac{d(x,\Gamma(t))}\eps\right)\right)+\eps R_\eps(t,x)\;,
\end{equation}
where $\bar{m}$ is the instanton, $Q \colon [0,T] \times \bb T^d \times \bb R \to \bb R$ is a smooth function such that
\be
\label{Q}
\sup_{(t,x,\xi) \in [0,T] \times \bb T^d \times \bb R}  \frac{\big| Q(t,x,\xi) \big| +  \big|\partial_t Q(t,x,\xi) \big| +\big| \partial_\xi Q(t,x,\xi)\big|}{1+|\xi|} <+\infty\;,
\ee
and $R_\eps \colon [0,T] \times \bb T^d \to \bb R$ is a smooth function.
\begin{itemize}
\item[(a)] If $\| R_\eps\|_\infty +\| \partial_t R_\eps \|_\infty \to 0$ as $\eps \to 0$  then, for any $Q$,
\[
\liminf_{\eps\to 0} S_\eps(\varphi_\eps) \ge S_\mathrm{ac}(\Gamma)\;.
\]
\item[(b)] There exist $Q^*$ such that, choosing $Q=Q^*$ and $R_\eps=0$ we have,
\[
\lim_{\eps\to 0} S_\eps(\varphi_\eps) = S_\mathrm{ac}(\Gamma)\;.
\]
\end{itemize}
\end{theorem}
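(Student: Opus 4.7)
The plan is to localize to a thin tubular neighborhood of $\Gamma(t)$ and perform a matched asymptotic expansion of $\mc L_\eps$ around the instanton profile. Outside a neighborhood $\{(t,x):|d(x,\Gamma(t))|\le \delta\}$, the exponential decay \eqref{mexp} of $\bar m'$ and the Lipschitz character of $a$ force the integrand of $S_\eps$ to be exponentially small, so the action concentrates on the interface region. There I would introduce Fermi coordinates $x=y+\zeta\, n_t(y)$ with $y\in\Gamma(t)$, $\zeta=\eps\xi$, so that $d(x,\Gamma(t))=\zeta$ and the Jacobian factorizes as $\eps(1-\eps\xi\kappa_t(y)+O(\eps^2))\,\rmd\sigma(y)\rmd\xi$. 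Since $\partial_t d(x,\Gamma(t))|_x=-v_t(y)+O(\eps|\xi|)$ at fixed $x$, the time derivative of the ansatz \eqref{recovery} takes the form $\dot\varphi_\eps=-\eps^{-1}v_t(y)\bar m'(\xi+\eps Q)+O(1)$, with the $O(1)$ correction linear in $\partial_t Q$, $\partial_t R_\eps$, and curvature terms.

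The main technical step is to expand $J_\eps*\varphi_\eps$ and $c_\eps(\varphi_\eps)=a(K_\eps*\varphi_\eps)$ to the required order at a typical point $x=y+\eps\xi n_t(y)$. Changing variables in the convolution, the leading term of $J_\eps*\varphi_\eps$ is the effective one-dimensional convolution $\tilde J*\bar m(\xi)$ from \eqref{tildeJ}, for which the instanton equation \eqref{inst} gives $\tanh(\beta\tilde J*\bar m)=\bar m$. Two sources of $\eps$-corrections arise: the curvature $\kappa_t(y)$, coming from the Jacobian of the Fermi parametrization combined with the quadratic expansion of $d$ transverse to $\Gamma(t)$, and the additive perturbation $\bar m'(\xi)Q+R_\eps$ of the profile itself. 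Substituting into \eqref{mc Leps=}, the instanton equation forces the would-be divergent contributions in $\eps^{-1}\mc L_\eps$ to cancel; collecting the surviving $O(1)$ terms and integrating in $\xi$ against the surface measure yields an expression of the form
\[
S_\eps(\varphi_\eps)=\int_0^T\!\rmd t\int_{\Gamma(t)}\!\rmd\sigma(y)\,\Phi_{v_t(y),\kappa_t(y)}\big(Q(t,y,\cdot),R_\eps\big)+o(1),
\]
with $\Phi$ a convex quadratic functional in its corrector arguments.

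For part (a), convexity gives $\Phi_{v,\kappa}(Q,R)\ge\Phi_{v,\kappa}(Q^*_{v,\kappa},0)$, and a direct computation based on \eqref{N}, \eqref{mu} and \eqref{tau} identifies this minimum with $(v-\theta\kappa)^2/(4\mu)$. Taking $\liminf$ then yields the lower bound $S_\mathrm{ac}(\Gamma)$, with the $o(1)$ remainder uniform thanks to the assumption $\|R_\eps\|_\infty+\|\partial_t R_\eps\|_\infty\to 0$. For part (b), the minimizer $Q^*(t,y,\xi)$ is obtained, at each fixed $(t,y)$, by solving a linear self-adjoint nonlocal equation in $\xi\in\bb R$ whose associated operator is the linearization at $\bar m$ of $m\mapsto m-\tanh(\beta\tilde J*m)$. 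By translation invariance the kernel of this operator is spanned by $\bar m'$, so Fredholm solvability requires an orthogonality condition against $\bar m'$; this condition is precisely the identity that produces $\theta=\mu\tau$ via \eqref{N} and the definition of $\tau$, and $Q^*$ is then uniquely determined up to the irrelevant zero mode. Inserting this $Q^*$, smoothed in $(t,y)$, together with $R_\eps=0$, into \eqref{recovery} gives $\lim_{\eps\to 0}S_\eps(\varphi_\eps)=S_\mathrm{ac}(\Gamma)$.

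The main obstacle I expect is the careful handling of the nonlocal convolutions $J_\eps*\varphi_\eps$ and $K_\eps*\varphi_\eps$ in Fermi coordinates. Unlike the Allen--Cahn case of \cite{KORV}, where the curvature emerges cleanly from the Laplacian through $\Delta d|_\Gamma=-\kappa$, here the $\eps$-order curvature contribution arises from a subtler interplay between the Jacobian of the Fermi parametrization, the quadratic expansion of the signed distance transverse to $\Gamma(t)$, and the nonlocal range of the kernels. Tracking these terms to the order needed, and verifying that the Fredholm solvability of the $Q^*$-equation reproduces the Einstein relation $\theta=\mu\tau$ in the form dictated by \eqref{N} and \eqref{tau}, is the delicate point, and is exactly where the corrector $Q$ inside the argument of the instanton becomes essential: the additive correction $R_\eps$ alone cannot absorb the curvature mismatch and recover the linear response mobility.
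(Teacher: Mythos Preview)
Your overall strategy is sound and reaches the right conclusions, but it differs structurally from the paper's argument in two notable ways.

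For part (b), the paper does not expand $\mc L_\eps$ directly. Instead it exploits the Legendre decomposition \eqref{gg*} to split $S_\eps=S_\eps^{(1)}+S_\eps^{(2)}+S_\eps^{(3)}$, where $S_\eps^{(1)}$ is the free-energy boundary term $\tfrac12[F^\eps(\varphi(T))-F^\eps(\varphi(0))]$, $S_\eps^{(2)}$ involves only $G^*(\beta^{-1}\dot\varphi;\alpha_\eps)$, and $S_\eps^{(3)}$ involves only $G(\eps\beta\tfrac{\delta F^\eps}{\delta m};\alpha_\eps)$. These three limits are computed separately: $S_\eps^{(1)}\to -\tfrac{1}{2\mu}\int\theta\kappa_t v_t$ via the $\Gamma$-convergence of $F^\eps$ to $\tau\,\mathrm{Per}$, $S_\eps^{(2)}\to \tfrac{1}{4\mu}\int v_t^2$ (independent of $Q$), and $S_\eps^{(3)}\to \tfrac{1}{4\mu}\int(\theta\kappa_t)^2$ after optimizing $Q$. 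The cross term in $(v_t-\theta\kappa_t)^2$ thus arises from the free-energy difference, not from a single expansion of the Lagrangian. Your direct expansion would recover the same total, but the paper's decomposition isolates the roles of surface tension, velocity and curvature more transparently and lets one invoke the known $\Gamma$-limit of $F^\eps$ off the shelf.

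For part (a), the paper does \emph{not} compute the exact limit and then minimize over $Q$. It instead uses the Legendre dual representation $\mc L_\eps(u,v)=\sup_p\{pv-\mc H_\eps(u,p)\}$ and plugs in an explicit test function $g_\eps(t,x)=\eps N p(t,x)\,\bar m'(d/\eps)/[2\bar a(d/\eps)\sqrt{1-\bar m(d/\eps)^2}]$ with a free scalar field $p$. The resulting lower bound $\Lambda_\eps(\varphi_\eps,g_\eps)$ is expanded (only a second-order Taylor of $\mc H_\eps$ in its second slot is needed), its limit is $\int(-v_t p+\theta\kappa_t p-\mu p^2)$, and one then optimizes pointwise over $p$. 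This avoids having to control the full expansion of $\mc L_\eps$ uniformly in the arbitrary corrector $Q$; the dependence on $Q$ drops out because the term $\int\nu(\rmd\xi)\,\bar m' L(\bar m'Q)$ vanishes by self-adjointness of $L$ and $L\bar m'=0$. Your convexity argument is correct in principle, but it requires the asymptotic expansion of $S_\eps(\varphi_\eps)$ as an \emph{equality} (with $o(1)$ remainder uniform in $Q$ under \eqref{Q}), which is more work than the duality shortcut.

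Your identification of the corrector equation and the Fredholm/orthogonality mechanism producing $\theta=\mu\tau$ matches the paper: the relevant operator is $L$ in \eqref{L} acting on $L^2(\bb R,\nu)$, the optimal corrector has the product form $Q^*(t,x,\xi)=\mc K(t,x)\bar Q(\xi)$ with $\mc K|_{\Gamma(t)}=\kappa_t$, and $\bar Q$ solves $L(\bar m'\bar Q)=\hat H$ where $\hat H$ is the component of $H$ in \eqref{H} orthogonal to $\bar m'$; existence and the growth bound \eqref{Q1} are handled by a separate lemma.
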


From a physical viewpoint, the main content of the result is the identification of the transport coefficients in the limiting rate function $S$. As expected, the mobility $\mu$, that is initially introduced via a linear response argument, coincides with the variance of the fluctuations around the motion by mean curvature. The mechanism behind this identification is an averaging property, common to homogenization problems. At the mathematical level, this is achieved by the introduction (in the same spirit of \cite{KS}) of the corrector $Q$ in the ansatz \eqref{recovery}: the transport coefficients are then identified by solving an optimization problem on $Q$. As mentioned in the Introduction, this issue does not appear in the Allen-Cahn case, in which the introduction of correctors is not needed.

The above statement is the analogous for the Ising-Kac model of that in \cite[Prop.\ 2.2]{KORV} for the Allen-Cahn action functional. While these results hint to the variational convergence (more precisely $\Gamma$-convergence) of the sequence of functionals $S_\eps$ to $S_\mathrm{ac}$, from technical viewpoint there are several missing steps. Concerning the lower bound, i.e., statement (a) in the theorem, the main difficulty consists in showing that the sequences $\varphi_\eps$ satisfying $S_\eps(\varphi_\eps)\le C$ are of the one-dimensional form given by \eqref{recovery} for suitable (not necessarily smooth) path $\Gamma$ and some $Q$ and $R_\eps$. In the Allen-Cahn case this is proven in \cite{MR}, where this structure of the sequence $\varphi_\eps$ is deduced as a consequence of the vanishing property of the discrepancy measures. As we have no analogue of the discrepancy measures for the Ising-Kac model, we have no clue on how to handle this issue in the present case. Concerning the upper bound, statement (b) provides the construction of the recovery sequence when the limiting path $\Gamma$ is smooth without nucleations. Combining, via a diagonal argument, this statement with the argument presented in Section \ref{sec:5}, it is also possible to construct a recovery sequence for piecewise $C^1$-paths. The missing step, that is common with the Allen-Cahn case, is the proof that general paths of finite action can be approximated by piecewise $C^1$-paths.

A natural further step is the analysis of the large deviation properties of the empirical magnetization for the underlying microscopic dynamics in the joint limit $\gamma\to 0$ and $\eps\to 0$, for instance when $\eps = |\log\gamma|^{-1}$. For the stochastic Allen-Cahn equation, the large deviations upper bound, with rate function $S$, is proven in \cite{BBP1} by constructing suitable exponential martingales. This strategy seems applicable also to the Ising-Kac model, but requires, as a crucial step, the $\Gamma$-convergence lower bound discussed above.

\subsection{Proof of Theorem \ref{thm:3.1}} 
\label{sec:3b}

To carry out the proof, we shall need the following results on the linearization of the nonlocal evolution. Consider Eq.\eqref{mfe1d0} for $h=0$; by \eqref{cxi} and using again the identity $\cosh(\b\tilde J*\bar m) = 1/\sqrt{1-\bar m^2}$,  the linearization around the instanton gives rise to the linear operator,
\be
\label{L}
L\psi = \frac{2\bar a}{\sqrt{1-\bar m^2}} (-\psi + (1-\bar m^2)\b \tilde J*\psi)\;. 
\ee
We regard it as an operator on $L^2(\bb R,\nu(\rmd\xi))$, where
\be
\label{nu}
\nu(\rmd\xi) = \frac{\rmd\xi}{2 \bar a(\xi) \sqrt{1-\bar m^2(\xi)}}\;.
\ee
We observe that $L$ is bounded, symmetric, and negative semidefinite, with $0$ a simple eigenvalue and $\bar m'$ the corresponding eigenvector. In fact, using again that $\bar m' = (1-\bar m^2)\b \tilde J *\bar m'$, it is easy to check that $L\bar m' = 0$ and that
\[
\int\!\nu(\rmd\xi)\; \psi(\xi) L\psi(\xi) = -\frac 12 \int\!\rmd \xi\int\!\rmd \xi'\; \b\tilde J(\xi-\xi') \bar m'(\xi)\bar m'(\xi') \left[\frac{\psi}{\bar m'}(\xi) - \frac{\psi}{\bar m'}(\xi')\right]^2 \;.
\]
As $\tilde J(0)>0$ and $\tilde J$ is continuous, we infer that the integral on the right-hand side is zero if and only if $\psi/\bar m'$ is constant.
An application of Weyl's theorem shows that $L$ has the gap property, i.e., that $0$ is an isolated eigenvalue.  The above arguments can be found in \cite{DOPT2} for the case \eqref{c}. A similar result holds also in $L^\infty$. This is done in \cite{DGP} for the case \eqref{c}, the extension to the general case is straightforward. 

For expository reasons, we prove the statements in reverse order.

\noindent {\it Proof of \rm (b).} 
Recalling \eqref{IS}, the decomposition \eqref{gg*}, and \eqref{gg*=}, we rewrite the rescaled action functional \eqref{S} as
\be
\label{S1}
S_\eps(\varphi) = S_\eps^{(1)}(\varphi)+  S_\eps^{(2)}(\varphi) +  S_\eps^{(3)}(\ph)\;,
\ee
where
\[
\begin{split}
S_\eps^{(1)}(\varphi)& = \frac 12 \big[ F^\epsilon(\ph(T)) - F^\epsilon(\ph(0))\big]\;, \\ S_\eps^{(2)}(\varphi) & = \frac 12 \int_0^T\!\rmd t\int \!\rmd x\; G^*((\beta\eps)^{-1}\dot \ph;\alpha_\eps(\ph)) \;, \\ S_\eps^{(3)}(\varphi) & = \frac 12 \int_0^T\!\rmd t\int \!\rmd x\; G\big(\eps\b\tfrac{\delta F^\epsilon}{\delta m} (\varphi);\alpha_\eps(\ph)\big)\;,
\end{split}
\]
with $F^\epsilon$ as in \eqref{Fe} and
\begin{equation}
	\label{Weps}
	\alpha_\eps(\varphi)= 2 \frac{c_\eps(\ph)}{\beta\eps^3} \sqrt{1-\ph^2} \;, \quad \eps\b\frac{\delta F^\epsilon}{\delta m}(\varphi) = \atanh \varphi -\b J_\eps*\varphi\;.
\end{equation}

In the sequel we choose $\ph=\varphi_\eps$ as in \eqref{recovery}, with $R_\eps =0$ and $Q$ to be determined later, and analyze separately the contribution of the three terms in \eqref{S1}. 

\noindent
1) As proven in \cite{P}, the free energy $F^\epsilon$ $\Gamma$-converges to $\tau \mathrm{Per}(\cdot)$, where $\mathrm{Per}(\cdot)$ is the perimeter functional. Moreover, for any choice of the corrector $Q$ and $t\in [0,T]$,  the function $\varphi_\eps(t,\cdot)$ is a recovery sequence. Hence,
\[
\lim_{\eps\to 0} S_\eps^{(1)}(\varphi_\eps)  = \frac{\tau}2 \left[\mathrm{Per}(\Omega(T)) - \mathrm{Per}(\Omega(0)\right] = -\frac{\tau}2 \int_0^T\!\rmd t \int_{\Gamma(t)} \!\rmd\s\; \kappa_t v_t\;,
\]
where in the last equality we used that  $-\int_{\Gamma(t)} \!\rmd\s\, \kappa_t v_t$ is the time derivative of $\mathrm{Per}(\Omega(t)) $. By \eqref{theta} we thus have,
\be
\label{s1}
\lim_{\eps\to 0} S_\eps^{(1)}(\varphi_\eps)  = -\frac1\mu \int_0^T\!\rmd t \int_{\Gamma(t)} \!\rmd\s\; \frac{\theta\kappa_t v_t}2\;.
\ee

\noindent
2) We first notice that, by Taylor expansion, $G^*(p,\alpha) = \alpha \big[\tfrac 12 \big(\tfrac p\alpha\big)^2 + O\big(\big(\tfrac p\alpha\big)^4\big)\big]$. By \eqref{Q},
\begin{equation}
\label{phid}
\dot\varphi_\eps (x,t) = -\tfrac {\partial_td(x,\Gamma(t))}\eps \; \bar m'\left(\tfrac{d(x,\Gamma(t))}\eps(1+O(\eps)) \right) \left(1+\left(1+\tfrac{|d(x,\Gamma(t))|}\eps\right) O(\eps)\right)\;.
\end{equation}
As $\bar m'(\xi)$ converges exponentially fast to zero as $|\xi|\to\infty$, see \eqref{mexp}, and in view of \eqref{Weps}, the integrand appearing in $S_\eps^{(2)}$ is smaller than any power of $\eps$ if $|d(x,\Gamma(t))|>C\eps(\log\eps)^2$. Therefore, we can restrict the domain of integration in a small neighborhood of $\Gamma_t$. In view of the expansion of $G^*$, using the co-area formula, we then get,
\[
\lim_{\eps \to 0} S_\eps^{(2)}(\varphi_\eps)  = \lim_{\eps \to 0} \int_0^T\!\rmd t\int\limits_{|s|\le C\eps(\log\eps)^2}\!\rmd s\int_{d=s}\!\rmd\s\; \eps^{-1}\;\frac{\bar m'(s/\eps)^2}{2c_\eps(\varphi_\eps)\sqrt{1-\bar m(s/\eps)^2}}\; \frac {(\partial_t d)^2}{4\beta}\;,
\]
where $\rmd\sigma$ is the surface measure on the level set of the distance function $d$, and, by \eqref{ceps},  $c_\eps(\varphi_\eps) = a(K_\eps*\varphi_\eps)$. To compute the asymptotic behavior of $c_\eps(\varphi_\eps)$, we choose an orthonormal frame with origin in the orthogonal projection $x_{\Gamma(t)}$ of $x$ on $\Gamma(t)$ and the first direction $\mathrm{e_0}$ along the normal to $\Gamma(t)$ at $x_{\Gamma(t)}$. If $d(x,\Gamma(t)) = s$ then $x = s\; \mathrm{e_0}$ and therefore, using \eqref{tildeK},
\[
\begin{split}
K_\eps*\varphi_\eps(x)  & = \int\!\rmd y\; \eps^{-d} k\left(\eps^{-1}\sqrt{(s-y\cdot \mathrm{e_0})^2+|y-(y\cdot \mathrm{e_0})\mathrm{e_0}|^2}\right) \bar m \left(\frac{y\cdot \mathrm{e_0}}\eps\right) + O(\eps) \\  & =  \int\!\rmd\xi'\; \tilde K\left(\frac{s}\eps - \xi'\right) \bar m(\xi') + O (\eps) = \tilde K*\bar m \left(\frac{s}\eps \right)  + O(\eps)\;.
\end{split} 
\]
We conclude that, recalling the definition of $\bar a$ in \eqref{cxi},
\be
\label{s2}
\lim_{\eps \to 0} S_\eps^{(2)}(\varphi_\eps)  = \int_0^T\!\rmd t\int_{\Gamma(t)} \!\rmd\s\; \frac{v_t^2}{4\beta}\int\!\rmd \xi\; \frac{\bar m'(\xi)^2}{2\bar a(\xi) \sqrt{1-\bar m(\xi)^2}} = \frac 1\mu \int_0^T\!\rmd t\int_{\Gamma(t)} \!\rmd\s\; \frac{v_t^2}4\;,
\ee
where we used that $-\partial_t d(\cdot,\Gamma(t)) = v_t$ on $\Gamma(t)$, and \eqref{N} and \eqref{mu} in the last identity.

\noindent
3) We are left with the limit of $S_\eps^{(3)}(\varphi_\eps)$. This is the point where the corrector $Q$ plays a role and has to be chosen appropriately. As $\atanh \bar m = \b \tilde J *\bar m$, 
\[
\begin{split}
\eps\b\frac{\delta F^\epsilon}{\delta m}(\varphi_\eps) (x,t)& = \b \int\!\rmd\xi'\; \tilde J\left(\frac{d(x,\Gamma(t))}\eps + \eps Q\left(t,x,\frac{d(x,\Gamma(t))}\eps \right) - \xi'\right) \bar m(\xi') \\ & \qquad - \b \int\!\rmd y\; J_\eps(x,y) \; \bar m \left(\frac{d(y,\Gamma(t))}\eps + \eps Q\left(t,x,\frac{d(y,\Gamma(t))}\eps \right)\right)\;.
\end{split} 
\]
Since $\bar m(\xi)$ converges exponentially fast to $\pm m_\b$ as $\xi\to\pm\infty$, see \eqref{mexp}, the above expression is smaller than any power of $\eps$ if $|d(x,\Gamma(t))|>C\eps(\log\eps)^2$. Therefore, as $G(q,\alpha) = \alpha \big[\tfrac 12 q^2 + O(q^4)\big]$, restricting the domain of integration, and using the previous computation for the limit of $c_\eps(\varphi_\eps)$, we obtain,
\[
\begin{split}
& \lim_{\eps \to 0}S_\eps^{(3)}(\varphi_\eps) \\  &  = \lim_{\eps \to 0}  \frac1{2\beta\eps^3} \int_0^T\!\rmd t\int\limits_{|s|\le C\eps(\log\eps)^2}\!\rmd s \int_{d=s}\!\rmd\s\; \bar a(s/\eps)\sqrt{1-\bar m(s/\eps)^2} \; \left(\eps\b\frac{\delta F^\epsilon}{\delta m}(\varphi_\eps)\right)^2. 
\end{split}
\]
To compute $\eps\b\frac{\delta F^\epsilon}{\delta m}(\varphi_\eps)$ we choose an orthonormal frame with origin in the orthogonal projection $x_{\Gamma(t)}$ of $x$ on $\Gamma(t)$, the first direction $\mathrm{e_0}$ along the normal to $\Gamma(t)$, and the remaining directions $\{\mathrm{e}_1,\ldots,\mathrm{e}_{d-1}\}$ along the principal curvature directions of $\Gamma(t)$. In this way, if $d(x,\Gamma(t)) = s$ with $|s|\le C\eps(\log\eps)^2$ and $|x-y|\le \eps$,  we have,
\[
x = s\; \mathrm{e_0}\;, \qquad d(y,\Gamma(t)) =y\cdot \mathrm{e_0} - \sum_{i=1}^{d-1} \kappa^{(i)}_t \frac{(y\cdot \mathrm{e}_i)^2}2+o(\eps^2)\;, 
\]
where $\kappa^{(i)}_t$ are the principal curvatures of $\Gamma(t)$ at $x_{\Gamma(t)}$; in particular,  the mean curvature reads $\kappa_t = \sum_{i=1}^{d-1} \kappa^{(i)}_t$. Therefore, if $d(x,\Gamma(t)) = s$,
\[
\begin{split}
\b \int\!\rmd\xi'\; & \tilde J\left(\frac{d(x,\Gamma(t))}\eps + \eps Q\left(t,x,\frac{d(x,\Gamma(t))}\eps \right) - \xi'\right) \bar m(\xi') \\ & = \b \int\!\rmd\xi'\; \tilde J\left(\frac{s}\eps - \xi'\right) \left[\bar m(\xi') + \eps Q\left(t,x,\frac{s}\eps \right) \bar m'(\xi')\right] + o(\eps)
\end{split}
\]
and
\[
\begin{split}
& \b \int\!\rmd y\; J_\eps(x,y) \; \bar m \left(\frac{d(y,\Gamma(t))}\eps + \eps Q\left(t,y,\frac{d(y,\Gamma(t))}\eps \right)\right) \\ &\;\; = \b \int\!\rmd y\; \eps^{-d} j\left(\eps^{-1}\sqrt{(s-y\cdot \mathrm{e_0})^2+|y-(y\cdot \mathrm{e_0})\mathrm{e_0}|^2}\right) \bigg[\bar m \left(\frac{y\cdot \mathrm{e_0}}\eps\right)  \\ & \qquad +\eps Q\left(t,x,\frac{y\cdot \mathrm{e_0}} \eps \right)\bar m'\left(\frac{y\cdot \mathrm{e_0}}\eps\right) - \sum_{i=1}^{d-1} \kappa^{(i)}_t \frac{(y\cdot \mathrm{e}_i)^2}2\;\bar m'\left(\frac{y\cdot \mathrm{e_0}}\eps\right)\bigg] + o (\eps) \\ & \;\; = \b \int\!\rmd\xi'\; \tilde J\left(\frac{s}\eps - \xi'\right) \big[\bar m(\xi') + \eps Q(t,x,\xi')\bar m'(\xi')\big] \\ &\qquad - \eps \b \kappa_t \int\!\rmd\xi'\int_{\bb R^{d-1}}\!\rmd\eta\; j\left(\sqrt{\left(\frac s\eps -\xi'\right)^2+|\eta|^2}\right) \bar m'(\xi')\; \frac{\eta_1^2}2 + o (\eps)\;.
\end{split} 
\]

We now choose $Q(t,x,\xi) = Q^*(t,x,\xi) := \mc K(t,x) \bar{Q}(\xi)$, where $\mc K \colon [0,T] \times \bb T^d \to \bb R$ is any smooth function satisfying $\mc K(t, x)= \kappa_t(x)$ for all $x \in \Gamma(t)$,  while $\bar{Q} \colon \bb R \to \bb R$ is a suitable a smooth function, to be fixed later and satisfying
\be
\label{Q1}
\sup_{\xi\in\bb R} \frac{\big| \bar Q(\xi) \big| + \big|\bar Q'(\xi)\big|}{1+|\xi|}<+\infty\;.
\ee
Therefore, under this assumption,
\be
\label{edf}
\begin{split}
	\eps\b\frac{\delta F^\epsilon}{\delta m}(\varphi_\eps) & = \eps\b \mc K(t,x)\int\!\rmd\xi'\; \tilde J\left(\frac{s}\eps - \xi'\right) \bar m'(\xi') \left[ \bar{Q}\left(\frac{s}\eps \right) -\bar{Q}(\xi')\right] \\ & +\eps \b \kappa_t \int\!\rmd\xi'\int_{\bb R^{d-1}}\!\rmd\eta\; j\left(\sqrt{\left(\frac s\eps -\xi'\right)^2+|\eta|^2}\right) \bar m'(\xi')\; \frac{\eta_1^2}2 + o (\eps)\;.
\end{split}
\ee
Inserting this expansion in the approximated expression for $S_\eps^{(3)}(\varphi_\eps)$ we obtain, 
\[
\lim_{\eps\to 0} S_\eps^{(3)}(\varphi_\eps) = \int_0^T\!\rmd t\int_{\Gamma(t)} \!\rmd\s\;  A_{\bar Q}(\kappa_t)\;,
\]
where
\[
A_{\bar Q}(\kappa_t)  := \frac {\kappa_t^2}{2\beta} \int\!\rmd\xi\; \bar a \sqrt{1-\bar m^2} \; \left[\b\tilde J*(\bar m' \bar{Q})- \b(\tilde J*\bar m')\bar{Q} - \b f \right]^2\;,
\]
with
\be
\label{f}
f(\xi) =\int\!\rmd\xi'\int_{\bb R^{d-1}}\!\rmd\eta\; j\left(\sqrt{\left(\xi -\xi'\right)^2+|\eta|^2}\right) \bar m'(\xi')\; \frac{\eta_1^2}2\;.
\ee
Recalling the definitions \eqref{L}, \eqref{nu}, and using $\bar m' = (1-\bar m^2)\b \tilde J *\bar m'$, we get,
\[
A_{\bar Q}(\kappa_t) = \frac {\kappa_t^2}{4\beta} \int\!\nu(\rmd\xi)\; \left[L(\bar m' \bar{Q}) - H \right]^2\;,
\]
where
\begin{equation}
\label{H}
H :=  \b\, 2\bar a \sqrt{1-\bar m^2}\, f \;.
\end{equation}
By \eqref{N} and \eqref{nu},
\[
\frac{\int\!\nu(\rmd\xi)\; \bar m'(\xi)  H(\xi)}{\int\!\nu(\rmd\xi)\;(\bar m')^2} =  N\b \int \!\rmd\xi\;\bar m'(\xi)f(\xi) =  N\b\tau = \theta\;,
\]
where in the last equalities we used that, by \eqref{tau}, $\tau = \int \!\rmd\xi\;\bar m'(\xi)f(\xi)$, and the relations \eqref{mu} and \eqref{theta}. It follows that the component of $H$ orthogonal to $\bar m'$ in $L^2(\bb R,\nu(\rmd\xi))$ is 
\begin{equation}
\label{cH}
\hat H = H - \theta \bar m'\;.
\end{equation}
Therefore, by the symmetry of $L$ and $L\bar m'=0$, 
\[
A_{\bar Q}(\kappa_t) = \frac{(\theta\kappa_t)^2}{4 \mu}+ \frac 1{4\beta}  \int\!\nu(\rmd\xi)\; \left[L(\bar m' \bar Q) - \hat H \right]^2\;.
\]
The corrector $\bar Q$ is now determined by minimizing the above expression. More precisely, $\bar Q$ is the solution to the equation $L(\bar m' \bar Q) =\hat H$ which satisfies \eqref{Q1} and $\bar{Q}(0)=0$, whose existence and uniqueness is the content of Lemma \ref{lem:Q} in Appendix \ref{app:a}. Moreover, with this choice,
\be
\label{s4}
\lim_{\eps\to 0} S_\eps^{(3)}(\varphi_\eps) = \frac{1}{\mu} \int_0^T\!\rmd t\int_{\Gamma(t)} \!\rmd\s\;  \frac{(\theta\kappa_t)^2}4\;.
\ee
By \eqref{s1}, \eqref{s2}, and \eqref{s4} the statement (b) of the theorem follows.

\medskip
\noindent {\it Proof of \rm (a).}
By Legendre duality,
 $\mc L_\eps(u,v)= \sup_p \; \{pv-\mc H_\eps(u,p)\}$, where, given measurable functions $u\colon \bb T^d \to [-1, 1]$ and $\eta \colon \bb T^d \to \bb R$,
\begin{equation}
\label{Heps=}
\begin{split}
\mc H_\eps(u,\eta) & = \eps^{-2}\frac{c_\eps(u)}{\beta} \Big[\cosh(\beta J_\eps*u+2\beta \eta) -\cosh(\beta J_\eps*u) \\ & \qquad\qquad  - u\sinh(\beta J_\eps*u+2\beta \eta) + u\sinh(\beta J_\eps*u)\Big]\;.
\end{split}
\end{equation}
Whence, letting $\varphi_\eps$ be as in \eqref{recovery}, for each $g=g(t,x)$, 
\[
\begin{split}
S_\eps(\varphi_\eps) & \ge \eps^{-1}\int_0^T\!\rmd t\int \!\rmd x\; \bigg\{ \dot\varphi_\eps g - \eps^{-2}   \frac{c_\eps(\varphi_\eps)}{\beta} \Big[\cosh(\beta J_\eps*\varphi_\eps+2\beta g) \\ & \qquad  -\cosh(\beta J_\eps*\varphi_\eps)- \varphi_\eps\sinh(\beta J_\eps*\varphi_\eps+2\beta g) + \varphi_\eps\sinh(\beta J_\eps*\varphi_\eps)\Big]\bigg\} \\ & =: \Lambda_\eps(\varphi_\eps,g) \;.
\end{split}
\]
Given a fixed smooth function $p=p(t,x)$, we choose
\[
g(t,x) = g_\eps(t,x) = \eps N  p(t,x) \left[\frac{\bar m'(s/\eps)}{2\bar a(s/\eps)\sqrt{1-\bar m(s/\eps)^2}}  \right]_{s=d(x,\Gamma(t))}
\]
and compute the limit of $\Lambda_\eps(\varphi_\eps,g_\eps)$ as $\eps\to 0$. By second order Taylor expansion of $\mc H_\eps(u,\cdot)$ and observing the the remainder are equibounded and converges to zero point-wise as $\eps\to 0$, we have
\[
\Lambda_\eps(\varphi_\eps,g_\eps) = \Lambda_\eps^{(1)}(\varphi_\eps,g_\eps)  + \Lambda_\eps^{(2)}(\varphi_\eps,g_\eps)  + \Lambda_\eps^{(3)}(\varphi_\eps,g_\eps) + o(1) \;,
\]
where
\[
\Lambda_\eps^{(1)}(\varphi_\eps,g_\eps) = \int_0^T\!\rmd t\int \!\rmd x\; \eps^{-1} \dot\varphi_\eps g_\eps\;,
\]
\[
\begin{split}
\Lambda_\eps^{(2)}(\varphi_\eps,g_\eps) & = \int_0^T\!\rmd t\int \!\rmd x\; \eps^{-3} c_\eps(\varphi_\eps)\cosh(\beta J_\eps*\varphi_\eps) \big[ \varphi_\eps - \tanh(\beta J_\eps*\varphi_\eps) \big]2g_\eps \\ & =  \int_0^T\!\rmd t\int \!\rmd x\; \eps^{-3} c_\eps(\varphi_\eps) \sqrt{1-\varphi_\eps^2} \sinh\bigg(\eps\b\frac{\delta F^\epsilon}{\delta m}(\varphi_\eps)\bigg) 2g_\eps\;,
\end{split}
\]
\[
\Lambda_\eps^{(3)}(\varphi_\eps,g_\eps)  = - \int_0^T\!\rmd t\int \!\rmd x\; \eps^{-3}  c_\eps(\varphi_\eps) \cosh(\beta J_\eps*\varphi_\eps) \big[ 1-\varphi_\eps \tanh(\beta J_\eps*\varphi_\eps) \big] 2\beta g_\eps^2\;.
\]

By \eqref{Q} and the assumptions on $R_\eps$, the expansion \eqref{phid} holds with an extra additive $o(\eps)$ due to the presence of $R_\eps$. Therefore, as $g_\eps$ is equibounded and recalling \eqref{N}, the same reasoning leading to \eqref{s2} gives,
\[
\begin{split}
\lim_{\eps\to 0}\Lambda_\eps^{(1)}(\varphi_\eps,g_\eps) & = \lim_{\eps\to 0}\int_0^T\!\rmd t\int\limits_{|s|\le C\eps(\log\eps)^2}\!\rmd s\int_{d=s}\!\rmd\s\; \eps^{-1}\;\frac{-\bar m'(s/\eps)^2Np\,\partial_td }{2\bar a(s/\eps)\sqrt{1-\bar m(s/\eps)^2}} \\ & =- \int_0^T\!\rmd t\int_{\Gamma(t)} \!\rmd\s\; v_t p\;.
\end{split}
\]
Concerning $\Lambda_\eps^{(2)}$ and $\Lambda_\eps^{(3)}$, we observe that, as $g_\eps=O(\eps) \bar m'(d/\eps)$ and the dependence on $\varphi_\eps$  of the integrands is locally Lipschitz, the contribution due to $R_\eps$ is $o(1)$ as $\eps\to 0$, and therefore can be neglected. 

Noticing that \eqref{edf} holds true here with $Q$ in place of $\mc K\bar Q$ and recalling \eqref{f} we have,
\[
\begin{split}
\lim_{\eps\to 0}\Lambda_\eps^{(2)}(\varphi_\eps,g_\eps) & = \lim_{\eps\to 0} \int_0^T\!\rmd t\int\limits_{|s|\le C\eps(\log\eps)^2}\!\rmd s\int_{d=s}\!\rmd\s\; \;N p \\ & \quad\qquad \times \eps^{-1}\Big\{\bar m' \big[-\b\tilde J*(\bar m' Q)+ \b(\tilde J*\bar m')Q+ \b \mc K f \big]\Big\}(s/\eps) \\ & = \int_0^T\!\rmd t\int_{\Gamma(t)} \!\rmd\s\; Np \int\!\nu(\rmd\xi)\; \bar m' \big[\b\kappa_t\, 2\bar  a \sqrt{1-\bar m^2}\, f -L(\bar m' Q)\big] \\ & = \int_0^T\!\rmd t\int_{\Gamma(t)} \!\rmd\s\;\theta\kappa_t p\;,
\end{split}
\]
where in the last identity we used that $\int\!\nu(\rmd\xi)\; \bar m' L(\bar m' Q) = 0$. Finally, as $\cosh(\beta J_\eps*\varphi_\eps) \big[ 1-\varphi_\eps \tanh(\beta J_\eps*\varphi_\eps) \big] = \sqrt{1-\bar m(s/\eps)^2} + o(1)$, 
\[
\begin{split}
\lim_{\eps\to 0}\Lambda_\eps^{(3)}(\varphi_\eps,g_\eps) & = \lim_{\eps\to 0} \int_0^T\!\rmd t\int_{\bb R}\!\rmd s\int_{d=s}\!\rmd\s\; \; \eps^{-1}\beta N^2 p^2 \frac{-\bar m'(s/\eps)^2}{2\bar a(s/\eps)\sqrt{1-\bar m(s/\eps)^2}} \\ & \xrightarrow{\eps\to 0} \int_0^T\!\rmd t\int_{\Gamma(t)} \!\rmd\s\; (-\mu p^2) \;.
\end{split}
\]

We conclude that, for any function $p=p(x,t)$, 
\[
\varliminf_{\eps\to 0} S_\eps(\varphi_\eps)  \ge \int_0^T\!\rmd t\int_{\Gamma(t)} \!\rmd\s\;(-v_tp+\theta\kappa_t p - \mu p^2)\;,
\]
whence, by optimizing over $p$,
\[
\varliminf_{\eps\to 0} S_\eps(\varphi_\eps)  \ge \int_0^T\!\rmd t\int_{\Gamma(t)} \!\rmd\s\;\sup_p\;  (-v_tp+\theta\kappa_t p - \mu p^2) = \frac 1{4\mu}\int_0^T\!\rmd t\int_{\Gamma(t)}\!\rmd\s\; (v_t-\theta \kappa_t)^2 \;.
\]
The statement (a) of the theorem is thus proved.
\qed      
                            
\section{Glauber+Kawasaki process}
\label{sec:4}

In this section we analyze the sharp interface limit of the action functional in the context of the Glauber+Kawasaki process.

\subsection{Motivation}
\label{sec:4.1}

The so-called Glauber+Kawasaki process is a simple stochastic model describing a chemical reaction among two species together with their diffusion. Recall that $\bb T_{L}^d$ denotes the $d$-dimensional torus of side $L$ and, given an integer $N\ge 1$, in this section we let $\bb T^d_{L,N} := (LN^{-1} \bb Z / L \bb Z)^d $ be the discrete approximation of $\bb T^d_L$ with lattice spacing $L/N$. Set also $\Omega_{N,L}:=\{0,1\}^{\bb T_{L,N}^d}$, if $\eta\in\Omega_{N,L}$ we regard its value at the site $i\in \bb T_{L,N}^d$, that can be either zero or one, as representing the species occupying $i$. The Glauber+Kawasaki process is a continuous time Markov chain on the state space $\Omega_{N,L}$, whose dynamics is obtained superimposing two elementary mechanisms, respectively modeling the reaction (Glauber) and the
diffusion (Kawasaki). Namely, the generator of the chain is 
\begin{equation}
	\label{g+k}
	\mc L_{N} :=  \mc L_{\mathrm{G}} + N^2 \mc L_{\mathrm{K}}\;. 
\end{equation}
Let $c$, a strictly positive local function of the configuration, be the rate of the reaction; then, given $f\colon \Omega_{N,L}\to \bb R$, 
\begin{equation*}
	\mc L_{\mathrm{G}} f \, (\eta) := 
	\sum_{i\in \bb T_{L,N}^d} 
	c(\tau_i \eta) \big[ f(\eta^i) - f(\eta)\big] \;,
\end{equation*}
where $\tau_i$ is the translation, i.e., $(\tau_i\eta)_j:= \eta_{j-i}$, and $\eta^i$ is the configuration obtained from $\eta$ by flipping the occupation number at $i$. The Kawasaki dynamics is instead defined by the generator,
\begin{equation*}
	\mc L_{\mathrm{K}} f \, (\eta) :=	\frac 12 \sum_{\{i,j\}}  \big[ f (\eta^{i,j}) - f(\eta) \big]\;,
\end{equation*}
where the sum runs over the (unordered) nearest neighbors pairs $\{i,j\} \subset \bb T_{L,N}^d$ and $\eta^{i,j}$ is the configuration obtained from $\eta$ by exchanging the occupation numbers at the sites $i$ and $j$. Note that in \eqref{g+k} the Kawasaki dynamics has been speeded up by $N^2$; as the lattice spacing is $L/N$ this corresponds to a diffusive rescaling. Let $\mc M_{+}(\bb T_L^d)$ be the set of positive measures on $\bb T_L^d$ and define the \emph{empirical density} as the map $\pi^N\colon \Omega_{L,N} \to \mc M_{+}(\bb T_L^d)$ given by
\begin{equation*}
	\pi^N (\eta) = \frac 1{N^d} \sum_{i\in \bb T_{L,N}^d} \eta_i \delta_{i}\;.
\end{equation*}
Assuming that the initial datum  $\eta(0)$ for the Glauber+Kawasaki process is well prepared, in the sense that $\pi^N (\eta(0)) \to u_0(x) \rmd x$ for some Borel function $u_0\colon \bb T_L \to [0,1]$, in \cite{DFL} it is proven that $\pi^N (\eta(t)) \to u(t,x) \rmd x$ in probability, where $u\colon [0,\infty)\times \bb T_L^d \to [0,1]$ solves the reaction diffusion equation,
\begin{equation}
	\label{hy}
	\begin{cases}
		\partial_t u = \frac 12 \Delta u + B(u) - D(u)\;,\\
		u(0) =u_0\;.
	\end{cases}
\end{equation}
The reaction term is described by the coefficients $B,D\colon [0,1] \to [0,+\infty)$ that can be obtained from the microscopic rate $c\colon \Omega_{L,N} \to (0,\infty)$ according to the following procedure. For $\rho\in [0,1]$ let $\nu_\rho$ be the Bernoulli measure with parameter $\rho$, namely the product probability on $\Omega_{L,N}$ with marginals $\nu_\rho(\eta_i=1) = \rho$. Then, 
\begin{equation*}
	B(\rho) = \nu_\rho\big( (1-\eta_0) c \big)\;,
	\qquad\qquad D(\rho) = \nu_\rho\big( \eta_0 c \big)\;.
\end{equation*}
Observe that, as $c$ is a strictly positive local function, $B$ and $D$ are strictly positive polynomials in $(0,1)$, while $B(1)=0$ and $D(0)=0$.  

The hydrodynamic equation \eqref{hy} describes the typical behavior of the Glau\-ber+Kawasaki process in the diffusive scaling limit. On the other hand, the statistics of the fluctuations cannot be described simply by adding a Gaussian noise to \eqref{hy}. In fact, as precised by the large deviation theory, the Poissonian nature of the underlying Glauber dynamics is still felt in the diffusive limit.  A main motivation for analyzing the large deviations properties of the empirical density is the following. Since the Glauber+Kawasaki process is irreducible, by general criterion on Markov chains, there exists a unique stationary probability $\mu_{L,N}$ on $\Omega_{L,N}$. As the dynamics does not satisfy the detailed balance condition, $\mu_{L,N}$ cannot be written in a closed form (with the exception of the special choices discussed in \cite{GJLV}) and, as shown in \cite{BJ}, it exhibits long range correlations. According to the general ideology on thermodynamics limits, we are not really interested in the whole details of the probability $\mu_{L,N}$, but mainly in the statistics of the empirical density in the limit $N\to \infty$.  It is therefore natural to introduce the sequence of probabilities $\{\wp_{L,N}\}_{N\in \bb N}$ on $\mc M_+(\bb T^d_L)$ defined by $\wp_{L,N} := \mu_{L,N} \circ (\pi^N)^{-1}$ and look for its asymptotic behavior as $N\to \infty$. Let $W\colon [0,1] \to \bb R$ be such that $B-D =-W'$. If $W$ has a unique minimizer, it is natural to expect that the sequence $\{\wp_{L,N}\}_{N\in \bb N}$ converges to the stationary solution of \eqref{hy} corresponding to the minimizer of $W$. Indeed, in the one dimensional case, this is proven in \cite{BPSV1} when $W$ has a single well, and in \cite{BPSV2} when $W$ has a double well.

A finer description of the asymptotics of $\{\wp_{L,N}\}_{N\in \bb N}$ can be achieved by looking at its large deviations. In a sloppy notation, this means the estimate 
\begin{equation*}
	\wp_{L,N} \big( \pi \sim u\, \rmd x \big) \asymp \exp\{ -N^d F_L(u) \}\;,
\end{equation*}
for a suitable functional $F_L$ on the set of densities $u\colon \bb T_L^d\to [0,1]$. Here $F_L$ plays the same role as the Cahn-Hilliard functional in the gradient theory of phase transition or the Lebowitz-Penrose functional \eqref{F}, with the minor inconvenient that it is not known.

According to the Friedlin-Wentzel theory for diffusions on $\bb R^n$, see \cite{FW}, the functional $F_L$ can be characterized in terms of a dynamical problem. To this end, fix $T>0$, a sequence of initial configurations $\eta^N(0)$, and consider the large deviations asymptotics for the empirical measure in the time window $[0,T]$. Under the assumption that $B$ and $D$ are concave, this large deviation principle has been proved in \cite{JLV,BL,LT} in one dimension (however, the result can be extended to higher dimension), the corresponding rate function, denoted by $I_{T,L}$, will be recalled later.  As proven in \cite[Chp.~6]{FW} for diffusions on $\bb R^n$ and in \cite{FLT} for the present setting (in one dimension and with additional hypotheses on the coefficients $B$ and $D$ implying a complete characterization of the stationary solutions to \eqref{hy}), the functional $F_L$ is the \emph{quasi-potential} associated to the dynamical rate functional $I_{T,L}$. This means that $F_L$ can be obtained from  $I_{T,L}$ by solving a suitable variational/combinatorial problem whose details are here omitted. Within this approach, in \cite{FLT} it is deduced that the cluster points of  $\{\wp_{L,N}\}$ are supported by the stationary solutions to \eqref{hy} associated to the minimizers of $W$.

We consider here the case of a bistable reaction term. Recalling $W$ satisfies $B-D =-W'$, we thus assume that $W$ has a twofold degenerate quadratic minimum, namely, there exist $0<\rho_-<\rho_+<1$ such that for $\rho\neq \rho_\pm$ we have $W(\rho) > W(\rho_-)=W(\rho_+)$ and $W''(\rho_-),W''(\rho_+) >0$.  In this situation, the probability $\mu_{L,N}$ describes the phase coexistence of the two stable phases, like a Gibbs measure undergoing a first order phase transition. Our purpose is to characterize the corresponding surface tension, that measures the cost of a transition between the two stable phases $\rho_\pm$.

As in the case of Ising-Kac model, the surface tension is identified by considering the sharp interface limit. By setting $\eps=L^{-1}$, as far as the dynamical behavior is concerned, under diffusive rescaling, the joint limit $N\to\infty$ and $\eps\to 0$ (with $\eps\gg  N^{-1}$) of the empirical measure has been analyzed in \cite{Bo,KS0}. More precisely, it is there proven that the limiting dynamics is described by the motion by mean curvature of the interface separating the stable phases, respectively in the classical setting \cite{Bo}  and in the level set formulation \cite{KS0}. 

In order to analyze the asymptotic behavior of the probability $	\wp_{L,N}$, let us introduce the family of functionals $F^\epsilon$ on the set of densities $u\colon \bb T^d \to [0,1]$ defined by 
\begin{equation*}
F^\epsilon (u) := \epsilon^{d-1} F_{\epsilon^{-1}} \big( u (\epsilon \cdot) \big)\;.
\end{equation*}
We are next going to argue, but not rigorously prove, that as $\epsilon\to 0$ the sequence $ F^\epsilon$ converges to a functional $F$ that is finite only if $u\in BV \big(\bb T^d;\{\rho_-,\rho_+\}\big)$ and for such $u$ is proportional to the (measure theoretic) perimeter of the jump set of $u$. Namely, $F(u)=\tau \mc H^{d-1}(S_u)$, where $ \mc H^{d-1}$ is the $(d-1)$-dimensional Hausdorff measure on $\bb T^d$ and $S_u$ denotes the jump set of $u$. The constant $\tau>0$ is then identified with the surface tension for the Glauber+Kawasaki processes and it will be characterized in terms of the solution to a one-dimensional ODE. 

As the quasi-potential $F_L$ is not directly accessible, we shall consider the sharp interface limit of the dynamical rate function $I_{T,L}$. More precisely, let $S_\epsilon$ be the functional on the set of paths $\phi\colon [0,T]\times \bb T^d \to [0,1]$ defined by $S_\epsilon (\phi) := \epsilon^{d-1} I_{\epsilon^{-2}T, \epsilon^{-1}} \big( \phi(\epsilon^2\cdot, \epsilon \cdot)\big)$. In Theorem~\ref{thm:4.1} below we prove that, for suitable sequences $\phi_\epsilon$ converging to
\begin{equation*}
	\phi(t,x)= 
	\begin{cases}
		\rho_+ & \textrm{ if } x\in \Omega(t)\;,\\
		\rho_- & \textrm{ if } x\not\in \bar \Omega(t)\;,\\
	\end{cases}
\end{equation*}
for some open $\Omega(t) \subset \bb T^d$ with smooth boundary, 
\begin{equation}
	\label{ldrf}
	\lim_{\epsilon\to 0} S_\epsilon(\phi_\epsilon) =\frac 1{2} \tau \int_0^T\!\rmd t\int_{\Gamma(t)}\!\rmd\s\;
	\left(v_t- \frac 12 \kappa_t\right)^2\;,
\end{equation}
where $\Gamma(t) =\partial \Omega(t)$, $\rmd\sigma$ is the surface measure on $\Gamma(t)$, $v_t$ is the normal velocity of $\Gamma(t)$, $\kappa_t$ is its mean curvature, and $\tau$ is a positive constant.

Observe now that the limiting dynamical rate function in \eqref{ldrf} measures, in $L^2$ sense, the deviations with respect to the motion by mean curvature $v_t = \frac 12 \kappa_t$. Since this evolution is - informally - the gradient flow of (one half of) the perimeter, we deduce that the quasi-potential associated to the limiting dynamical rate function is proportional to the perimeter and we identify the proportionality constant with $\tau$, see \cite[Thm.~4.3.1]{FW} for a proof of this statement in the context of diffusions in $\bb R^n$.

\subsection{Preliminaries}
\label{sec:4.2}

Let $\bar u$ be the instanton  (standing wave) associated to the hydrodynamic equation \eqref{hy} in dimension one, namely the solution to
\begin{equation}
	\label{inst1}
	\frac 12 \bar u '' + B(\bar u) - D(\bar u) = 0\;, \quad \bar u(\pm\infty) = \rho_\pm\;, \quad \bar u(0) = \frac{\rho_++\rho_-}2\;.
\end{equation}
Clearly $\bar u'(\xi)>0$ and it can be easily shown that  
\begin{equation}
\label{asbaru}
\sup_{\xi\in\bb R} \big(\big|\bar u'(\x)\big| + |\bar u''(\xi)\big|\big) \rme^{\gamma|\xi|}  <+\infty\;,
\end{equation}
where $\gamma=\min\{D'(\rho_+)-B'(\rho_+);D'(\rho_-)-B'(\rho_-)\}$.

The large deviation asymptotics for the empirical density under the Glauber+Ka\-wasaki  dynamics has been analyzed in \cite{BL,JLV}. We next recall the associated rate function. Given $L$ positive, let $C(L):=\{\rho\in L^\infty(\bb T_L^d) \colon 0\le \rho \le 1\}$, where $\bb T_L^d$ is the $d$-dimensional torus of side $L>0$, equipped with the (metrizable) weak*-topology. We define $I_{T,L} \colon C([0,T];C(L)) \to [0, \infty]$ by
\begin{equation}
\label{Igk}
I_{T,L}(\phi) := \sup_{H\in C^{1,2}([0,T]\times\bb T_L^d)} J^H_{T,L}(\phi) \;,
\end{equation}
where   
\begin{equation}
\label{Jgk}
\begin{split}
&J^H_{T,L}(\phi) := \int\!\rmd r\, \big[\phi(T,\cdot)H(T,\cdot) - \phi(0,\cdot)H(0,\cdot)\big] \\ & \quad - \int_0^T\!\rmd t \int\!\rmd r\, \bigg[\phi\bigg(\partial_t H + \frac 12 \Delta H \bigg) + \frac 12 \phi (1-\phi) |\nabla H |^2 \bigg] \\ & \quad -  \int_0^T\!\rmd t \int\!\rmd r\, \bigg[B(\phi) (\rme^H-1) + D(\phi)(\rme^{-H}-1)\bigg]\;.
\end{split}
\end{equation}
Under suitable assumptions on the initial conditions, in\cite{BL,JLV,LT} it is proven that the empirical magnetization sampled according to the  Glauber dynamics, regarded as a random variable taking values in the Skorokhod space $D([0,T];\mc M(\bb T^d_L))$, satisfies a large deviation principle with speed $N^d$ and rate function $\mc I_{T,L}$ given by $\mc I_{T,L}(\nu) = I_{T,L}(\phi)$ if $\nu_t= \phi_t \, \rmd r$ for some $\phi \in C([0,T];C(L))$ and $+\infty$ otherwise. 

By \cite[Lemma 2.1]{JLV}, or rather its generalization in dimension $d\ge 1$, if $\phi\in C^{2,3}([0,T]\times \bb T_L^d;(0,1))$ then the supremum in \eqref{Igk} is achieved for $H=H(\phi) \in C^{1,2}([0,T]\times \bb T_L^d)$, the unique classical solution to the non-linear Poisson equation, 
\begin{equation}
\label{Ihke}
\partial_t \phi + \nabla\cdot [\phi(1-\phi)\nabla H]  = \frac 12 \Delta \phi + B(\phi)\rme^H - D(\phi)\rme^{-H}\;,
\end{equation}
so that, for such $H$,
\begin{equation}
\label{Ihk1}
\begin{split}
& I_{T,L}(\phi)  = J^H_{T,L}(\phi) = \frac 12 \int_0^T\!\rmd t \int \!\rmd r\, \phi(1-\phi) |\nabla H|^2  \\ &  +  \int_0^T\!\rmd t \int\!\rmd r\, B(\phi) \big(1-\rme^H + H \rme^H\big) +  \int_0^T\!\rmd t \int\!\rmd r\, D(\phi)  \big(1-\rme^{-H} - H \rme^{-H}\big)\;.
\end{split}
\end{equation}
Due to the lack of reversibility of the underlying microscopic dynamics, it is not possible to decompose the action function in a form analogous to \eqref{gg*}. 

\subsection{Sharp interface limit of the action functional} 
\label{sec:4.3}

To this end, we set $\eps=L^{-1}$, perform a diffusive rescaling of space and time, and normalize the resulting action with a factor $L^{d-1}$. As in the previous section, the space variable in $\bb T^d$ is denoted by $x$. We then introduce the rescaled action functional renormalized with a factor $L^{d-1}$. We thus define the rescaled functional $S_\eps \colon C([0,T];C(1)) \to [0,\infty]$ by
\be
\label{Sgk}
S_\eps(\phi) = \eps^{d-1} I_{\eps^{-2}T,\eps^{-1}}(\phi(\eps^2\cdot, \eps \cdot )) \;,
\ee
whose variational representation is 
\be
\label{Sgk1}
S_\eps(\phi) := \sup_{H\in C^{1,2}([0,T]\times\bb T^d)} J^H_\eps(\phi) \;,
\ee
where   
\begin{equation}
\label{Jgk1}
\begin{split}
&J^H_\eps(\phi) := \frac 1\eps\int\!\rmd x\, \big[\phi(T,\cdot)H(T,\cdot) - \phi(0,\cdot)H(0,\cdot)\big] \\ & \quad - \frac 1\eps \int_0^T\!\rmd t \int\!\rmd x\, \bigg[\phi \bigg(\partial_t H + \frac 12 \Delta H \bigg) + \frac 12 \phi(1-\phi) |\nabla H|^2 \bigg]  \\ & \quad -  \frac 1\eps \int_0^T\!\rmd t \int\!\rmd x\, \bigg(B(\phi) \frac{\rme^H-1}{\eps^2} + D(\phi) \frac{\rme^{-H}-1}{\eps^2}\bigg)\;.
\end{split}
\end{equation}
Moreover,  the representation \eqref{Ihke}, \eqref{Ihk1} gives, for $\phi\in C^{2,3}([0,T]\times \bb T^d;(0,1))$,
\begin{equation}
\label{Ihke1}
\partial_t \phi + \nabla\cdot [\phi(1-\phi)\nabla H]  = \frac 12 \Delta \phi + \frac{B(\phi)\rme^H - D(\phi)\rme^{-H}}{\eps^2}
\end{equation}
and 
\begin{equation}
\label{Ihk2}
\begin{split}
S_\eps(\phi) & = \frac 1{2\eps} \int_0^T\!\rmd t \int \!\rmd x\, \phi(1-\phi) |\nabla H|^2  \\ & \quad + \frac 1{\eps^3}  \int_0^T\!\rmd t \int\!\rmd x\, B(\phi) \big(1-\rme^H + H \rme^H\big) \\ & \quad +  \frac 1{\eps^3} \int_0^T\!\rmd t \int\!\rmd x\, D(\phi)  \big(1-\rme^{-H} - H \rme^{-H}\big)\;.
\end{split}
\end{equation}

As in the previous section, given a $C^1$-family of oriented smooth surfaces $\Gamma=\{ \Gamma(t)\}_{t\in [0,T]}$, with $\Gamma(t) = \partial\Omega(t)$ for some open  $\Omega(t) \subset \bb T^d$, we denote by $n_t=n_{\Gamma(t)}$ the inward normal of $\Gamma(t)$, by $v_t\colon \Gamma(t) \to \bb R$ the normal velocity of $\Gamma$ at time $t$, by $\kappa_t$ the mean curvature of $\Gamma(t)$, and  by  $d(\cdot,\Gamma(t))$ a regularized version of the signed distance from $\Gamma(t)$. 

For such families of surfaces we define the limiting action functional,
\be
\label{Rpn1}
S_\mathrm{ac}(\Gamma) = \frac 1{4\mu}\int_0^T\!\rmd t\int_{\Gamma(t)}\!\rmd\s\; \left(v_t- \frac 12 \kappa_t\right)^2\;.
\ee
where the mobility $\mu$ is computed according to the following procedure. Recalling the definition \eqref{inst1}, let $L_{\bar u}$ be the linear operator given by
\begin{equation}
\label{Lbar}
L_{\bar u} \psi = \big[(\bar u (1-\bar u) \psi'\big]' - [B(\bar u) + D(\bar u)]\psi\;, 	
\end{equation}
which is obtained by linearizing \eqref{Ihke} in dimension one at $\phi=\bar u$ around $H=0$. Then,
\begin{equation}
\label{mu1}
\mu =\frac{2\langle\bar u', (-L_{\bar u})\bar u'\rangle_{L^2} }{ \|\bar u'\|_{L^2}^4}\;.
\end{equation}
For later purpose, we notice that since $B+D$ is strictly positive then $L_{\bar u}$ is bijective on $L^2(\bb R)$. Moreover, the inverse of $L_{\bar u}$ preserves the decays properties, in the sense that if $L_{\bar u}\psi = w$ then, for any $\gamma'>0$,
\begin{equation}
\label{decayl}
\sup_{\xi\in\bb R}|w(\xi)|\rme^{\gamma'|\xi|} < + \infty  \quad \Longrightarrow \quad \sup_{\xi\in\bb R} \big(|\psi(\xi)| + |\psi'(\xi)| + |\psi''(\xi)|\big) \rme^{\gamma'|\xi|}< +\infty\;.
\end{equation}
\begin{theorem}
\label{thm:4.1}
Given a $C^1$-family of oriented smooth surfaces $\Gamma=\{ \Gamma(t)\}_{t\in [0,T]}$, with $\Gamma(t) = \partial\Omega(t)$, consider sequences $\{ \phi_\eps \} \subset C([0,T];C(1))$, converging to $\rho_- + (\rho_+ - \rho_-) \id_{\Omega(\cdot)}$, of the form  
\begin{equation}
\label{recovery1}
\phi_\eps(t,x) = \bar u\left(\frac{d(x,\Gamma(t))}\eps + \eps Q\left(t,x,\frac{d(x,\Gamma(t))}\eps\right)\right)+\eps R_\eps(t,x)\;,
\end{equation}
where $\bar u$ is the instanton, $Q \colon [0,T] \times \bb T^d \times \bb R \to \bb R$ is a smooth function such that
\be
\label{Q3}
\begin{split}
& \sup_{(t,x,\xi) \in [0,T] \times \bb T^d \times \bb R} \bigg\{ \frac{\big| Q(t,x,\xi) \big| + \big| \partial_\xi Q(t,x,\xi)\big|}{1+|\xi|} \\ & \quad\qquad\qquad + \frac{\big|\partial_t Q(t,x,\xi) \big| +\big|D_x Q(t,x,\xi) \big| + \big|D^2_{xx}Q(t,x,\xi)\big|}{1+|\xi|}\bigg\} < + \infty\;,
\end{split}
\ee
and $R_\eps \colon [0,T] \times \bb T^d \to \bb R$ is a smooth function.
\begin{itemize}
\item[(a)] If $\| R_\eps\|_\infty +\| \partial_t R_\eps \|_\infty +\| \Delta R_\eps \|_\infty \to 0$ as $\eps \to 0$  then, for any $Q$,
\[
\liminf_{\eps\to 0} S_\eps(\phi_\eps) \ge S_\mathrm{ac}(\Gamma)\;.
\]
\item[(b)] There exist $Q^*$ such that, choosing $Q=Q^*$ and $R_\eps=0$ we have,
\[
\lim_{\eps\to 0} S_\eps(\phi_\eps) = S_\mathrm{ac}(\Gamma)\;.
\]
\end{itemize}
\end{theorem}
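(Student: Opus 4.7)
\medskip\noindent\emph{Proof proposal.} The basic strategy mirrors the proof of Theorem~\ref{thm:3.1}, but exploiting the representation \eqref{Sgk1}--\eqref{Ihk2} in place of the gradient flow decomposition (unavailable here due to the non-reversibility). The key technical input is the invertibility of $L_{\bar u}$, guaranteed by the strict positivity of $B+D$, together with the decay estimate \eqref{decayl} and the exponential decay \eqref{asbaru} of $\bar u'$.

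\smallskip\noindent\emph{Proof of (b).} For a smooth $\phi_\eps$ taking values in $(0,1)$, the action is given by \eqref{Ihk2} with $H=H_\eps$ the classical solution to \eqref{Ihke1}. I would construct $H_\eps$ by the matched-asymptotic ansatz $H_\eps(t,x)=\eps\, h(t,x,d(x,\Gamma(t))/\eps)+o(\eps)$ with $h$ smooth and exponentially decaying in $\xi$. Inserting $\phi_\eps$ of the form \eqref{recovery1} with $R_\eps=0$ into \eqref{Ihke1} and exploiting the eikonal identity $|\nabla d|^2=1$ near $\Gamma(t)$, together with $\Delta d=-\kappa_t$ and $\partial_t d=-v_t$, the $O(1/\eps^2)$ terms cancel by the instanton equation \eqref{inst1} (combined with the identity $(B'-D')(\bar u)\bar u'=-\bar u'''/2$). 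Collecting the $O(1/\eps)$ contributions and using the identity $\partial_\xi^2(\bar u'Q)-\bar u'''Q=2\bar u''Q_\xi+\bar u'Q_{\xi\xi}$, one arrives at the one-dimensional equation (for each $(t,x)$)
\begin{equation*}
L_{\bar u} h \;=\; \big(v_t-\tfrac12\kappa_t\big)\bar u' \;+\; M Q,\qquad M Q:=\bar u''\,Q_\xi+\tfrac12\bar u'\,Q_{\xi\xi}\;=\;\frac{1}{2\bar u'}\bigl((\bar u')^2 Q_\xi\bigr)'.
\end{equation*}

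\smallskip\noindent\emph{Choice of the corrector.} A direct integration by parts gives $M^*\bar u'=0$, hence $\langle\bar u',MQ\rangle_{L^2(\bb R)}=0$ for every admissible $Q$. I would then seek $h$ in the form $h=c(t,x)\,\bar u'$: this automatically satisfies the orthogonality $M^*h=0$ required for $h$ to be critical with respect to variations of $Q$, and reduces the equation to $M Q = c\,L_{\bar u}\bar u' - (v_t-\tfrac12\kappa_t)\bar u'$. Imposing the solvability condition $\langle\bar u',\cdot\rangle=0$ (equivalent to the decay of $Q$ at $\pm\infty$, obtained by explicit integration of the first-order ODE $[(\bar u')^2 Q_\xi]'=2\bar u'f$) pins down $c$ uniquely, and the resulting $Q^*(t,x,\xi)=\mathcal K(t,x)\bar Q(\xi)$ with $\mathcal K$ extending $v_t-\tfrac12\kappa_t$ off $\Gamma(t)$ satisfies \eqref{Q3} thanks to \eqref{asbaru} and \eqref{decayl} (adapted to the first-order operator $M$).

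\smallskip\noindent\emph{Computation of the limit.} Taylor expanding $1-e^H+He^H=H^2/2+O(H^3)$ and $1-e^{-H}-He^{-H}=H^2/2+O(H^3)$ in \eqref{Ihk2}, using the coarea formula with $dx= \eps \,d\xi\,d\sigma(1+O(\eps))$ in an $O(\eps|\log\eps|^2)$ tube around $\Gamma(t)$ (outside of which all integrands are super-polynomially small in $\eps$ by \eqref{asbaru}), one obtains
\[
\lim_{\eps\to 0}S_\eps(\phi_\eps)=\frac12\int_0^T\!dt\int_{\Gamma(t)}\!d\sigma\;\langle h,-L_{\bar u}h\rangle_{L^2(\bb R)}.
\]
With the choice above $h=c\,\bar u'$, the inner product equals $c^2\langle\bar u',-L_{\bar u}\bar u'\rangle$, and combining with the value of $c$ dictated by solvability, a short computation using $\mu=2\langle\bar u',-L_{\bar u}\bar u'\rangle/\|\bar u'\|^4$ reproduces $S_\mathrm{ac}(\Gamma)$ given in \eqref{Rpn1}.

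\smallskip\noindent\emph{Proof of (a).} I would exploit the variational representation \eqref{Sgk1}: for every smooth test function $p=p(t,x)$ set
$H^p_\eps(t,x):=\eps\,p(t,x)\,\bar u'(d(x,\Gamma(t))/\eps)\,(\mathrm{normalization})$, so that $S_\eps(\phi_\eps)\ge J^{H^p_\eps}_\eps(\phi_\eps)$. Plugging $\phi_\eps$ from \eqref{recovery1} (with $R_\eps$ satisfying the vanishing hypothesis) and expanding $J^{H^p_\eps}_\eps$ to second order in $p$, the contributions due to $R_\eps$ are infinitesimal by the assumed smallness of $\|R_\eps\|_\infty+\|\partial_tR_\eps\|_\infty+\|\Delta R_\eps\|_\infty$ together with Lipschitz dependence of the integrands. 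Identifying the linear and quadratic parts via the same coarea argument as above, and using once more $\langle\bar u',MQ\rangle=0$ to eliminate the $Q$-dependence, one obtains
\[
\liminf_{\eps\to 0}S_\eps(\phi_\eps)\ge \int_0^T\!dt\int_{\Gamma(t)}\!d\sigma\Big(-(v_t-\tfrac12\kappa_t)\,p\;-\;\mu\,p^2\Big)
\]
up to normalization, so that optimizing over $p$ gives $S_\mathrm{ac}(\Gamma)$.

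\smallskip\noindent\emph{Main obstacle.} The central difficulty is the construction of the corrector $Q^*$: one must simultaneously invert the operator $M$ in $\xi$ with appropriate decay at $\pm\infty$, verify that the solvability condition $\langle\bar u',\cdot\rangle=0$ is compatible with $h\in\ker M^*=\mathrm{span}(\bar u')$, and check that the resulting $Q^*$ satisfies the growth bound \eqref{Q3} uniformly in $(t,x)$. The smooth dependence on $(t,x)$ follows from the smoothness of $\Gamma$, while the uniform bounds require quantitative control on $L_{\bar u}^{-1}$ and $M^{-1}$ (restricted to $\{\bar u'\}^\perp$ with exponential weights), in the spirit of Lemma~\ref{lem:Q} in the Appendix. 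A secondary technical point is the bookkeeping of the $o(1/\eps)$ remainders in the asymptotic matching, which must be controlled uniformly in the tubular neighborhood of $\Gamma(t)$ of size $O(\eps|\log\eps|^2)$.
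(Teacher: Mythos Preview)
Your proposal is essentially correct and close to the paper's argument, with one genuine (and simplifying) difference in part (a).

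For part (b), your route and the paper's coincide: both plug the ansatz \eqref{recovery1} into \eqref{Ihke1}--\eqref{Ihk2}, match at order $O(1/\eps)$ to obtain the one-dimensional equation $L_{\bar u}h=-\bar u'+M\bar Q$ (with $M\bar Q=\bar u''\bar Q'+\tfrac12\bar u'\bar Q''$), and then optimize in $\bar Q$. The only stylistic difference is that the paper phrases the optimization as minimizing the quadratic form $C_{\bar Q}=\tfrac12\langle(\bar u'-M\bar Q),(-L_{\bar u})^{-1}(\bar u'-M\bar Q)\rangle$ over $\bar Q$ and computes the minimizer $\bar\psi$ explicitly, whereas you reach the same minimizer by first positing $h=c\,\bar u'$ (justified by your correct observation that $M^*\bar u'=0$ is the Euler--Lagrange condition) and then determining $c$ from the solvability condition $\langle\bar u',M\bar Q\rangle=0$. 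Both paths yield the same $\bar Q$, given in the paper by the explicit integral $\bar Q(\xi)=2\int_0^\xi(\bar u')^{-2}\int_{-\infty}^{\xi'}\bar u'\bar\psi$. The paper also supplies the missing ingredient you flag as the ``secondary technical point'': the rigorous control of the remainder $H-\eps H_1=\eps^2\tilde H_\eps$ is carried out in the Appendix via a linear/nonlinear decomposition of \eqref{Ihke1} and maximum/comparison principle estimates, which justifies the formal matched asymptotics.

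For part (a), your approach actually differs from the paper's. The paper chooses a \emph{general} test profile $H=\eps H_1(t,x,\xi)$ in \eqref{Sgk1}, maximizes the resulting quadratic form in $H_1$ by solving $L_{\bar u}\mathcal H=F_Q$ with $F_Q=(\tfrac12\Delta d-\partial_t d)\bar u'-MQ$, and then invokes the constrained minimization \eqref{C*} to bound $\tfrac12\langle F_Q,(-L_{\bar u})^{-1}F_Q\rangle\ge C^*(v_t-\tfrac12\kappa_t)^2$. You instead restrict from the outset to $H_1=p(t,x)\bar u'(\xi)$; since $\langle\bar u',MQ\rangle_{L^2(\bb R)}=0$ for \emph{every} admissible $Q$, the $Q$-dependence drops out immediately, and optimizing over the scalar $p$ already yields $C^*(v_t-\tfrac12\kappa_t)^2$. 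This is precisely the strategy used in the paper for Theorem~\ref{thm:3.1}(a), and it is shorter here too; the paper's route would give a sharper $Q$-dependent intermediate bound, but this is not needed for the statement.
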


For expository reasons, we prove the statements in reverse order.

\noindent {\it Proof of \rm (b).} In the sequel, we assume $R_\eps =0$ and $Q(t,x,\xi) := A(t,x)\bar Q(\xi)$, where $A\colon [0,T]\times\bb T^d \to \bb R$ and $\bar Q\colon \bb R\to \bb R$ are smooth functions to be determined later, with $\bar Q$ such that 
\be
\label{Q2}
\sup_{\xi\in\bb R} \bigg\{ \frac{\big| \bar Q(\xi) \big|}{1+|\xi|} + \big|\bar Q'(\xi)\big|  + \big|\bar Q''(\xi)\big|\bigg\}   < + \infty\;.
\ee
In order to compute the cost of the sequence \eqref{recovery1} with these choices, we start by the expansions,
\begin{equation}
\label{expsu}
\begin{split}
\phi_\eps  & = \bar u(d_\eps) + \eps\bar u'(d_\eps) Q_\eps + \eps R^{(1)}_\eps \;, \qquad \partial_t\phi_\eps = \bar u'(d_\eps) \frac{\partial_t d}\eps + R^{(2)}_\eps \;, \\ \Delta\phi_\eps  & = \frac{\bar u''(d_\eps)}{\eps^2} +\frac {\bar u'(d_\eps) \Delta d + \bar u'''(d_\eps) Q_\eps+  2 \bar u''(d_\eps) Q_\eps' + \bar u'(d_\eps) Q_\eps''}\eps+ R^{(3)}_\eps \; , 
\end{split}
\end{equation}
where the notation $d=d(x,\Gamma(t))$, $d_\eps = d/\eps$, $Q_\eps = Q(t,x,d_\eps)$, $Q_\eps'=\partial_\xi Q(t,x,d_\eps)$, and $Q_\eps''=\partial_{\xi\xi}^2 Q(t,x,d_\eps)$ has been adopted, and $R^{(i)}_\eps=R^{(i)}_\eps(t,x,d_\eps)$, $i=1,2,3$, are such that
\[
 \limsup_{\eps \to 0}	\sup_{(t,x,\xi) \in [0,T] \times \bb T^d \times \bb R} \rme^{\gamma |\xi|/2} |R^{(i)}_\eps(t,x,\xi)|  < \infty\;,
\]
with $\gamma$ as in \eqref{asbaru}. Eqs.\eqref{expsu} can be easily derived using \eqref{asbaru} and recalling that $|\nabla d|=1$ in a neighborhood of $\Gamma(t)$.

Next, assuming for the function $H$, unique solution to \eqref{Ihke1}, an expansion of the form $H(t,x) = \eps H_1(t,x,d(x,\Gamma(t))/ \eps)) + O(\eps^2)$, we deduce a linear equation for $H_1$.  To this end, we write 
\begin{equation}
\label{expsu1}
\begin{split}
 & \frac{B(\phi_\eps)\rme^H - D(\phi_\eps)\rme^{-H}}{\eps^2} = \frac{B(\bar u(d_\eps)) - D(\bar u(d_\eps))  }{\eps^2}  \\ & \quad +  \frac{B'(\bar u(d_\eps)) - D'(\bar u(d_\eps)) }{\eps} \bar u'(d_\eps) Q_\eps  +  \frac{B(\bar u(d_\eps)) + D(\bar u(d_\eps)) }{\eps} H_1 + O(1)\;.
\end{split}
\end{equation}
Plugging \eqref{expsu} and \eqref{expsu1} in \eqref{Ihke1} and making use of \eqref{inst1} and its derivative, we deduce, after some straightforward computations, that, for $(t,x)\in [0,T]\times\bb T^d$ fixed, $H_1(t,x,\cdot)$ satisfies 
\begin{equation}
\label{H1}
(\bar u(1-\bar u) H_1')' - [B(\bar u) + D(\bar u)] H_1 =  \left(\frac 12 \Delta d -\partial_td\right) \bar u' + \bar u'' A\bar Q' + \frac 12 \bar u' A\bar Q''\;.
\end{equation}
Hence, by choosing $A = \partial_td - \frac 12 \Delta d$ and recalling \eqref{Lbar}, we get $H_1(t,x,\xi) = A(t,x) h(\xi)$ where $h\colon \bb R\to\bb R$ solves
\begin{equation}
\label{key}
L_{\bar u} h = - \bar u' + \bar u'' \bar Q' + \frac 12 \bar u' \bar Q''\;.
\end{equation}
For later purpose, we remark that, in view of \eqref{asbaru}, \eqref{Q2}, and \eqref{decayl},
\begin{equation}
\label{asbarh}
\sup_{\xi\in\bb R} \big(|h(\xi)| + |h'(\x)| + |h''(\xi)|\big) \rme^{\gamma|\xi|}  <+\infty\;.
\end{equation}
With this choice of $H_1$,  the initial assumption on the  expansion for $H$ holds. More precisely, in Appendix \ref{app:a} it is proven that $H(t,x) = \eps H_1(t,x,d(x,\Gamma(t))/ \eps)) + \eps^2 \tilde{H}_\eps(t,x)$ with
\begin{equation}
\label{Htilde}
\limsup_{\eps \to 0} \sup_{(t,x) \in [0,T] \times \bb T^d }  \big(  |\tilde{H}_\eps(t,x)|+\eps |\nabla \tilde{H}_\eps(t,x)|\big) <\infty \, .
\end{equation}

From \eqref{Ihk2}, the explicit form of $H_1$, and \eqref{Htilde} we then have,
\begin{equation}
\label{Ihk3}
S_\eps(\phi_\eps) = S_\eps^{(1)} + S_\eps^{(2)} + O(\eps) \;,
\end{equation}
where, after an integration by part,
\[
S_\eps^{(1)} = \frac 1{2\eps} \int_0^T\!\rmd t \int \!\rmd x\, \bigg[-\nabla\cdot (\phi_\eps (1-\phi_\eps)\, \eps^2 \nabla H_1) + \frac{B(\phi_\eps) + D(\phi_\eps)}2 H_1\bigg] H_1
\]
and 
\[
\begin{split}
S_\eps^{(2)}& =\frac 12 \int_0^T\!\rmd t \int \!\rmd x\, \phi_\eps (1-\phi_\eps)\, \eps^2 \nabla H_1 \cdot \nabla \tilde H_\eps \\ & \quad + \int_0^T\!\rmd t \int\!\rmd x\, \bigg[ (B(\phi_\eps)+D(\phi_\eps)) H_1\tilde H_\eps + \frac{B(\phi_\eps)-D(\phi_\eps)}3 H_1^3 \bigg] \;.
\end{split}
\]
Now, from \eqref{expsu}, \eqref{asbarh}, and \eqref{H1} we deduce,
\begin{equation}
\label{nuova}
\begin{split}
\nabla \cdot (\phi_\eps & (1-\phi_\eps) \nabla (\eps H_1))  - \frac{B(\phi_\eps)+D(\phi_\eps)}{\eps^2} (\eps H_1) \\  & = \frac 1\eps (\bar u(d_\eps)(1-\bar u(d_\eps)) H_1')' - \frac{B(\bar u(d_\eps)) + D(\bar u(d_\eps))}\eps H_1 + R^{(4)}_\eps \\ & =  \frac 1\eps \bigg[ \left(\frac 12 \Delta d -\partial_td\right) \bar u'(d_\eps) + \bar u''(d_\eps) A\bar Q' + \frac 12 \bar u'(d_\eps) A\bar Q''\bigg] + R^{(4)}_\eps\;,
\end{split}
\end{equation}
with $R^{(4)} = R^{(4)}_\eps(t,x,d_\eps)$, such that
\[
\limsup_{\eps \to 0}	\sup_{(t,x,\xi) \in [0,T] \times \bb T^d \times \bb R} \rme^{\gamma |\xi|/2} |R^{(4)}_\eps(t,x,\xi)|  < \infty\;.
\]
Recalling that $A(t,x) = \partial_td(x,\Gamma(t)) - \frac 12 \Delta d(x,\Gamma(t)) = v_t(x) - \frac 12 \kappa_t(x)$ for $x\in\Gamma(t)$ and using \eqref{asbaru}, \eqref{expsu}, \eqref{key}, and  \eqref{Htilde}, by applying the co-area formula as done in the proof of Theorem \ref{thm:3.1}, we can compute the limit of $S_\eps^{(1)}$ and $S_\eps^{(2)}$ as $\eps\to 0$. By few direct calculations (that we omit) we obtain,
\begin{equation}
\label{Ihk4}
\lim_{\eps\to 0} S_\eps^{(1)}= C_{\bar Q} \int_0^T\!\rmd t\int_{\Gamma(t)}\!\rmd\s\; \bigg(v- \frac 12 \kappa_t\bigg)^2\;, \qquad \lim_{\eps\to 0} S_\eps^{(2)} = 0\;, 
\end{equation}
where, denoting by $\langle\cdot,\cdot\rangle_{L^2}$ the scalar product in $L^2(\bb R;\rmd\xi)$,
\[
C_{\bar Q} = \frac 12 \Big\langle \Big(\bar u' - \bar u''\bar Q'-\frac 12 \bar u' \bar Q''\Big), (- L_{\bar u})^{-1}  \Big(\bar u' - \bar u''\bar Q'-\frac 12 \bar u' \bar Q''\Big) \Big\rangle_{L^2}\;.
\]

We observe that, since $\bar u''\bar Q'+\frac 12 \bar u' \bar Q'' \in L^2(\bb R;\rmd\xi)$ and $\big\langle\bar u', \bar u''\bar Q'+\frac 12 \bar u' \bar Q''\big\rangle_{L^2} = 0$,
\begin{equation}
\label{C*}
C_{\bar Q} \ge C^* := \frac 12 \min_{\psi:\langle \bar u',\psi\rangle_{L^2}=0} \langle(\bar u' - \psi) , (- L_{\bar u})^{-1}   (\bar u' - \psi)\rangle_{L^2}\;.
\end{equation}
The above minimum is achieved at 
\[
\bar\psi = \bar u' - \frac{\|\bar u'\|_{L^2}^2}{\langle \bar u', (- L_{\bar u})\bar u'\rangle_{L^2}} L_{\bar u} \bar u'\;,
\]
so that, recalling \eqref{mu1},
\begin{equation}
\label{C*1}
C^* = \frac 12 \langle(\bar u' - \bar\psi) (- L_{\bar u})^{-1}   (\bar u' - \bar\psi)\rangle_{L^2} = \frac{\|\bar u'\|_{L^2}^4}{2\langle\bar u',(-L_{\bar u})\bar u'\rangle_{L^2}} = \frac{1}{4\mu}\;.
\end{equation}

In view of \eqref{Ihk3}, \eqref{Ihk4}, and \eqref{C*}, to conclude the proof of the statement (b), it remains to show that there is $\bar Q$ for which the minimum is obtained, i.e., there exists a solution $\bar Q$ to the linear equation $\bar u''\bar Q'+\frac 12 \bar u' \bar Q'' = \bar \psi$ satisfying \eqref{Q2}. This solution can be explicitly computed, precisely, 
\[
\bar Q(\xi) = 2\int_0^\xi\!\rmd\xi'\, \frac{1}{\bar u'(\xi')^2} \int_{-\infty}^{\xi'}\rmd\xi''\, \bar u'(\xi'')\bar\psi(\xi'')\;,
\]
which satisfies \eqref{Q2} in view of \eqref{asbaru}.

\medskip
\noindent {\it Proof of \rm (a).} Let $\phi_\eps$ be as in \eqref{recovery1}. By \eqref{Sgk1}, \eqref{Jgk1}, and integration by parts we have, for any $H\in C^{1,2}([0,T]\times\bb T^d)$,
\begin{equation}
\label{Jgk2}
\begin{split}
 S_\eps(\phi_\eps)  & \ge \frac 1\eps \int_0^T\!\rmd t \int\!\rmd x\, \left[ \bigg(\partial_t \phi_\eps- \frac 12 \Delta \phi_\eps \bigg)H - \frac 12 \phi_\eps(1-\phi_\eps) |\nabla H|^2 \right]  \\ & \quad +  \frac 1\eps \int_0^T\!\rmd t \int\!\rmd x\, \left( B(\phi_\eps) \frac{1-\rme^H}{\eps^2} + D(\phi_\eps) \frac{1-\rme^{-H}}{\eps^2}\right)\;.
\end{split}
\end{equation}
We choose $H$ of the form $H(t,x) = \eps H_1 (t,x,d(x,\Gamma(t))/ \eps)$, with $H_1\colon [0,T]\times\bb T^d \times \bb R \to \bb R$ a smooth function to be determined later such that, for some $\gamma'>0$,
\begin{equation}
\label{h1}
\sup_{(t,x,\xi) \in [0,T] \times \bb T^d \times \bb R} \rme^{\gamma' |\xi|} \big\{ |H_1(t,x,\xi)| +  |\nabla_xH_1(t,x,\xi)| + |\partial_\xi H_1(t,x,\xi)|\big\}   < \infty\;.
\end{equation}
Noticing that the dependence on $\phi_\eps$  of the integrands in the right-hand side of \eqref{Jgk2}  is locally Lipschitz and recalling the hypothesis on $R_\eps$, in view of the above assumptions on $H_1$, it is readily seen that  the contribution due to $R_\eps$ is $o(1)$ as $\eps\to 0$, and hence it can be neglected. 

Therefore, few direct calculations (using \eqref{expsu}, here applied to $\phi_\eps - \eps R_\eps$, and the co-area formula) give
\[
\begin{split}
\liminf_{\eps\to 0} S_\eps(\phi_\eps) & \ge \int_0^T\!\rmd t \int_{\Gamma(t)}\!\rmd\sigma \int\!\rmd\xi\, \bigg\{ \left[ \bar u' \Big( \partial_t d - \frac 12 \Delta d \Big) + \bar u'' Q' + \frac 12 \bar u' Q'' \right] H_1 \\ & \quad - \frac 12 \bar u(1-\bar u) (H_1')^2 - \frac{B(\bar u)+D(\bar u)}2 H_1^2 \bigg\}\;,
\end{split}
\]
where the notation $Q'=\partial_\xi Q(x,t,\xi)$, $Q''=\partial_{\xi\xi}Q(t,x,\xi)$, $H_1'=\partial_\xi H_1(x,t,\xi)$, and $H_1''=\partial_{\xi\xi}H_1(t,x,\xi)$ has been adopted. The maximum of the expression in the right-hand side is obtained for $H_1 = \mc H$ with, for any $(t,x)\in [0,T]\times\bb T^d$ fixed, $\mc H(t,x,\cdot)$ solution to 
\[
L_{\bar u} \mc H =  \Big(\frac 12 \Delta d -\partial_td\Big) \bar u' - \bar u'' Q' - \frac 12 \bar u' Q'' := F_Q \;,
\]
which satisfies the assumptions \eqref{h1} in view of \eqref{asbaru}, \eqref{Q3}, and \eqref{decayl}.  Hence,
\begin{equation}
\label{fh}
\liminf_{\eps\to 0} S_\eps(\phi_\eps) \ge \frac 12 \int_0^T\!\rmd t \int_{\Gamma(t)}\!\rmd\sigma \int\!\rmd\xi\, F_Q (-L_{\bar u})^{-1}F_Q\;.
\end{equation}
We next observe that, in view of \eqref{C*}, for each $(t,x)\in [0,T]\times\bb T^d$ fixed, 
\[
\frac 12 \int\!\rmd\xi\, F_Q (-L_{\bar u})^{-1}F_Q \ge \Big( \partial_t d - \frac 12 \Delta d \Big)^2 C^*\;.
\]
As $\partial_td(x,\Gamma(t)) - \frac 12 \Delta d(x,\Gamma(t)) = v_t(x) - \frac 12 \kappa_t(x)$ for $x\in \Gamma(t)$, the statement (a) follows by \eqref{fh} and \eqref{C*1}.
\qed

\section{Approximating nucleation events}
\label{sec:5}

In this section we discuss how the nucleation part of the rate function in \eqref{s+s} can be recovered from its absolutely continuous part. The
general result should be the following. Given $T>0$, let $\Gamma=\Gamma(t)$, $t\in [0,T]$, be a path of interfaces satisfying $S(\Gamma)<+\infty$ (with possible nucleation events), then there exists a sequence of paths $\{\Gamma_\delta\}$ with zero nucleation cost such that $\Gamma_\delta\to \Gamma$ and $S_\mathrm{ac}(\Gamma_\delta) \to S(\Gamma)$ as $\delta\to 0$. We shall not discuss the issue at this level of generality but rather provide a strategy for a special class of paths. We also restrict the analysis to the two-dimensional isotropic case.

The basic idea is that in dimension $d=2$ points (i.e., $(d-2)$-dimensional interfaces) can be nucleated with no cost and we can then let them evolve for a short time (vanishing as $\delta\to 0$) in such a way that at the final time the resulting interface approximates the one we want to nucleate. Moreover, as we are going to argue, it is possible to arrange the evolution so that the corresponding cost indeed approximates the nucleation one.

Let us consider a path $\Gamma$ of the form,
\begin{equation}
	\label{ori}
	\Gamma(t)=
	\begin{cases}
		\emptyset  & \textrm{if } t\in [0,\bar t)\;, \\
		\Gamma^0 (t) & \textrm{if } t \in [\bar t, T]\;,
	\end{cases}
\end{equation}
where $\bar t$ is the nucleation time and $\Gamma^0$ is a smooth path of smooth one-dimensional interfaces with initial value $\bar\Gamma := \Gamma^0(\bar t)$. We assume that $\Gamma^0(t)=\partial\Omega^0(t)$, for some open set $\Omega^0(t)$, when $t\in (\bar t,T]$, while $\bar \Gamma = \lim_{t\downarrow \bar t} \Omega^0(t)$. The corresponding nucleation cost is 
\begin{equation*}
	S_\mathrm{nucl} (\Gamma) = 2 \tau \mathrm{Per} (\bar\Gamma)\;,
\end{equation*}
where we recall that $\tau$ is the surface tension and $\mathrm{Per}(\bar\Gamma)$ denotes here the length of $\bar\Gamma$, while the factor $2$ is due to the fact that $\bar\Gamma$ has be thought as an interface with double multiplicity. 

In view of the assumed smoothness of $\bar \Gamma$, by localization, it suffices to consider the case in which it is a segment, say of length $\ell$.  In order to define the corresponding approximating path $\Gamma_\delta$, we first construct a path $\Sigma_\delta(s)$, $s\in [0,\sigma_\delta]$, with $\sigma_\delta\to 0$, satisfying $\Sigma_\delta(0) \to \bar\Gamma$, $\mathrm{Per}\big(\Sigma_\delta(0)\big)\to 2 \, \mathrm{Per} (\bar\Gamma)= 2\ell$, and $\Sigma_\delta(\sigma_\delta)=\emptyset$. To this end, chop the segment $\bar\Gamma$ into $N_\delta$ sub-segments (with $N_\delta$ diverging as $\delta\to 0$) and then fat each subsegment to an ellipse with major axis of length $\ell/N_\delta$ and minor axis of length $m_\delta \ll \ell/N_\delta$. Denoting by $\bar\Sigma_\delta$ the resulting interface, then $\bar\Sigma_\delta\to \bar\Gamma$ and $\mathrm{Per}\big(\bar\Sigma_\delta\big)\to 2 \ell$. The path $\Sigma_\delta(s)$, $s\ge 0$ is now defined as the evolution by mean curvature with initial datum $\bar\Sigma_\delta$ and transport coefficient $\theta$. Here, we understand that each ellipse evolves by mean curvature separately. By comparing the evolution of each ellipse with that of a circle of initial diameter equals to the major axis, we deduce that $\Sigma_\delta(\sigma_\delta)=\emptyset$ for some $\sigma_\delta \le (\ell/N_\delta)^2/(8\theta )$.

We now set
\begin{equation*}
	\Gamma_\delta(t) :=
	\begin{cases}
		\emptyset & t\in [0,\bar t -\sigma_\delta)\;, \\
		\Sigma_\delta(\bar t - t) & t\in [\bar t -\sigma_\delta, \bar t)\;, \\
		\Gamma^0_\delta(t)  & t\in [\bar t, T]\;,
	\end{cases}
\end{equation*}
where $\Gamma^0_\delta$ is a suitable approximation of the path $\Gamma^0$ in \eqref{ori}, satisfying $\Gamma^0_\delta(\bar t) = \Sigma_\delta(0)$, organized so that $S_{\mathrm{ac}, [\bar t, T]}(\Gamma^0_\delta) \to S_{\mathrm{ac}, [\bar t, T]}(\Gamma^0)$,  whose details are omitted. To conclude we next show that 
\begin{equation*}
	S_{\mathrm{ac}, [0,\bar t]} (\Gamma_\delta) \to 2 \tau \ell\,.
\end{equation*}
Even if this is essentially a Friedlin-Wentzel argument for evaluating the quasi-potential in the reversible case, we provide the details of the computation. Denoting by $v_\delta$ and $\kappa_\delta$ the normal velocity and mean curvature of  $\Gamma_\delta$, we write
\begin{equation*}
	\frac 1{4\mu} (v_\delta -\theta \kappa_\delta)^2 
	= \frac 1{4\mu} (v_\delta +\theta \kappa_\delta)^2
	- \frac{\theta}{\mu} \kappa_\delta v_\delta\;.
\end{equation*}
By construction of the path, that has been obtained by time reversal of motion by mean curvature, the first term on the right-hand side above vanishes. Since 
\begin{equation*}
\frac{\rmd}{\rmd t } \mathrm{Per}(\Gamma_\delta(t)) = -\int_{\Gamma_\delta(t)} v_\delta \kappa_\delta\,,
\end{equation*}
we conclude by using the Einstein relation \eqref{theta}.

\appendix
\section{}
\label{app:a}

\begin{lemma}                
\label{lem:Q}                 
For $\hat H$ as in \eqref{cH} there exists a unique solution $\bar Q$ to the equation $L(\bar m' \bar Q) =\hat H$ which satisfies $\bar{Q}(0)=0$ and Eq.\eqref{Q1}.
\end{lemma}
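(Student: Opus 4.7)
The plan is to recast $L(\bar m' \bar Q)=\hat H$ as the problem of finding $\psi \in L^2(\bb R,\nu)$ with $L\psi=\hat H$, exploit the spectral theory of $L$ on $L^2(\bb R,\nu)$ recalled at the beginning of Section~\ref{sec:3b}, and then set $\bar Q = \psi/\bar m'$ up to an additive constant fixed by the normalization at $0$.

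First I would check that $\hat H \in L^2(\bb R,\nu)$. Since $\bar m$ converges to $\pm m_\beta$ exponentially fast by \eqref{mexp} and $j$ is compactly supported, the function $f$ in \eqref{f} decays exponentially at $\pm\infty$, hence so do $H$ (recall \eqref{H}) and $\bar m'$, and therefore also $\hat H$; in particular $\hat H$ lies in $L^2(\bb R,\nu)$ (whose weight is comparable to Lebesgue since $1-\bar m^2 \to 1-m_\beta^2>0$). By construction in the paragraph preceding \eqref{cH} one has $\langle \hat H,\bar m'\rangle_{L^2(\nu)}=0$. Since $L$ is bounded, self-adjoint and negative semidefinite on $L^2(\bb R,\nu)$, with simple kernel $\bb R \bar m'$ and a spectral gap, the Fredholm alternative produces a unique $\psi \in L^2(\bb R,\nu)$ orthogonal to $\bar m'$ with $L\psi = \hat H$.

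The main step is to upgrade $\psi$ from $L^2$ to a pointwise bound compatible with the growth \eqref{Q1}. I would rewrite the equation in the convolution fixed-point form
\begin{equation*}
\psi(\xi) = \beta\big(1-\bar m(\xi)^2\big)(\tilde J * \psi)(\xi) - \frac{\sqrt{1-\bar m(\xi)^2}}{2\bar a(\xi)}\,\hat H(\xi),
\end{equation*}
and bootstrap. At $\pm\infty$ the homogeneous limit reduces to $\psi = \beta(1-m_\beta^2)\tilde J * \psi$, whose natural exponential rate is exactly the $\alpha$ of \eqref{alpha}, i.e. the rate at which $\bar m'$ and $\hat H$ decay. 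Since the forcing is resonant in this sense, the best pointwise bound available is
\begin{equation*}
|\psi(\xi)| + |\psi'(\xi)| \le C(1+|\xi|)\rme^{-\alpha|\xi|},
\end{equation*}
obtained by iterating the fixed point along the lines of the instanton analysis in \cite{DOPT2,DGP}. Carrying out this decay bootstrap, which requires tracking how the orthogonality $\langle\hat H,\bar m'\rangle_{L^2(\nu)}=0$ interacts with the resonance, is the main technical obstacle.

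Given such $\psi$, I set
\begin{equation*}
\bar Q(\xi) := \frac{\psi(\xi)}{\bar m'(\xi)} - \frac{\psi(0)}{\bar m'(0)}\,,
\end{equation*}
which is smooth because $\bar m' > 0$ everywhere. The pointwise bound on $\psi$ together with the lower bound $\bar m'(\xi) \ge c\,\rme^{-\alpha|\xi|}$ from \eqref{mexp} yields $|\bar Q(\xi)| + |\bar Q'(\xi)| \le C(1+|\xi|)$, proving \eqref{Q1}; clearly $\bar Q(0)=0$, and by linearity $L(\bar m'\bar Q) = L\psi - [\psi(0)/\bar m'(0)]L\bar m' = \hat H$. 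For uniqueness, if $\bar Q_1$ and $\bar Q_2$ both solve the equation and satisfy the normalization and \eqref{Q1}, then $\bar m'(\bar Q_1-\bar Q_2)$ has linear growth dominated by the exponential decay of $\bar m'$, hence belongs to $L^2(\bb R,\nu)$ and to $\ker L = \bb R\bar m'$; this forces $\bar Q_1-\bar Q_2$ to be constant, and the normalization at $0$ makes this constant vanish.
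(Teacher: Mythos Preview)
Your overall strategy coincides with the paper's: solve $L\psi=\hat H$ in $L^2(\bb R,\nu)$ via the spectral gap, set $\bar Q=\psi/\bar m'$ up to the normalizing constant, and reduce \eqref{Q1} to the pointwise estimate $|\psi|+|\psi'|\le C(1+|\xi|)\rme^{-\alpha|\xi|}$. Existence, uniqueness, and the passage from $\psi$ to $\bar Q$ are handled exactly as in the paper.

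The genuine gap is that you do not carry out the decay estimate, and this is the only substantive part of the lemma. The paper's argument is not a soft bootstrap. One first rewrites the fixed point with the constant asymptotic coefficient, $\bar\psi=p\,\tilde J*\bar\psi+g$ where $p=\beta(1-m_\beta^2)<1$ and $g$ (which absorbs the variable-coefficient remainder $\beta(m_\beta^2-\bar m^2)\tilde J*\bar\psi$ together with the forcing) decays at the resonant rate $\alpha$. Multiplying by $\rme^{\alpha\xi}$ turns $K(\xi,\xi')=\rme^{\alpha(\xi-\xi')}p\tilde J(\xi-\xi')$ into a probability kernel by \eqref{alpha}; iterating on the half-line $\{\xi>0\}$ yields the representation $\rme^{\alpha\xi}\bar\psi(\xi)=\int_{-1}^0\pi(\xi,\xi')\rme^{\alpha\xi'}\bar\psi(\xi')\,\rmd\xi'+G(\xi)$ for a hitting kernel $\pi$ and a remainder $G$. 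The boundary term stays bounded because $\pi$ is a probability kernel (and in fact converges as $\xi\to\infty$, cf.\ \cite{DGP}), while the linear growth of $G$ is obtained by reading $\sum_n\int K_\mathrm{o}^n(\xi,\xi')\,\rmd\xi'$ as survival probabilities of a random walk with strictly negative drift started at $\xi$, summed via a Chebyshev bound on fourth moments. Note that none of this step uses the orthogonality $\langle\hat H,\bar m'\rangle_{L^2(\nu)}=0$: that condition enters only in the $L^2$ solvability, not in the decay, so your remark that one must ``track how the orthogonality interacts with the resonance'' points in the wrong direction; the $(1+|\xi|)$ prefactor is produced purely by the random-walk mechanism.
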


\begin{proof}
Recall the properties of $L$, given by \eqref{L} and acting on $L^2(\bb R,\nu(\rmd\xi))$, described in Subsection \ref{sec:3}. Letting $E:=\{\phi\in L^2(\bb R,\nu(\rmd\xi)) \colon \int\!\nu(\rmd \xi)\; \bar m'(\xi)\phi(\xi) = 0 \}$, the bounded operator $L\colon E\to E$ is symmetric, coercive, and therefore a bijection. As $\hat H\in E$, we deduce that there exists a unique solution $\phi^*\in E$ to the equation $L\phi = \hat H$. This implies that the family of functions of the form $\psi_\lambda := \phi^* + \lambda \bar m'$, $\lambda\in \bb R$, coincides with the set of all the solutions to $L\psi = \hat H$ in $L^2(\bb R,\nu(\rmd\xi))$. Moreover, from the explicit form \eqref{L} of $L$ and the smoothness of $\tilde J$, $\bar m$, $c$, and $\hat H$, the functions $\psi_\lambda$ turn out to be smooth as well. In particular, since $\bar m '>0$, the values of $\lambda$ is uniquely determined by the condition $\psi_\lambda(0) = 0$. 

So far, we have proved that there exists a unique solution $\bar \psi\in L^2(\bb R,\nu(\rmd\xi))$ to the equation $L\psi = \hat H$ which has the form $\bar \psi = \bar m'\bar Q$ with $\bar Q$ a smooth function with $\bar Q(0)=0$. We are left with the proof of Eq.\eqref{Q1}, which is equivalent, view of \eqref{mexp}, to prove that
\begin{equation}
\label{psi1}
\sup_{\xi\in\bb R}\frac{ \rme^{\alpha|\xi|}\big| \bar \psi(\xi) \big| +  \rme^{\alpha|\xi|} \big|\bar \psi'(\xi)\big|}{1+|\xi|}<+\infty\;,
\end{equation}
with $\alpha$ as in \eqref{alpha}.

Recalling \eqref{L}, \eqref{f}, \eqref{H}, and \eqref{cH}, we have that
\begin{equation}
\label{eqpsi}
\bar\psi = p \tilde J * \bar\psi + g\;, \qquad \bar\psi' = p \tilde J * \bar\psi' + g'\,,
\end{equation}
where, see \eqref{fb},
\begin{equation}
\label{barp}
p := \beta (1-m_\beta^2) = \frac{1}{1+f_\beta''(m_\beta)} < 1
\end{equation}  
and
\begin{equation}
\label{g}
g := (m_\beta^2-\bar m^2)  \beta \tilde J *\bar\psi - (1-\bar m^2) \beta f + \frac{\theta\sqrt{1-\bar m^2}}{2\bar a}\bar m'\;.
\end{equation}
Since $\bar\psi\in L^2(\bb R,\nu(\rmd\xi))$, both $\tilde J*\bar\psi$ and $\tilde J' *\bar\psi$ are bounded functions, so that, in view of \eqref{mexp} and \eqref{f}, 
\begin{equation}
\label{gst}
\sup_{\xi\in\bb R} \rme^{\alpha|\xi|}\big(\big| g(\xi) \big| +\big|g'(\xi)\big|\big) < +\infty\;.
\end{equation}
This immediately implies, by \eqref{eqpsi}, that also $\bar\psi$ and $\bar\psi'$ are bounded functions. To obtain the decay properties \eqref{psi1}, we now adapt to the present context part of the analysis developed in \cite{DGP} to study the spatial structure of the traveling fronts. Actually, we only show that 
\begin{equation}
\label{psi2}
\sup_{\xi >0}\frac{\rme^{\alpha\xi}\big| \bar \psi(\xi)\big|}{1+\xi}<+\infty\;,
\end{equation}
since the proof of the other bound in \eqref{psi1} is similar. 

Let
\begin{equation}
\label{K}
K(\xi,\xi') := \rme^{\alpha(\xi-\xi')}p\tilde J(\xi-\xi')\;, \qquad K_\mathrm{o}(\xi,\xi') := K(\xi,\xi') \id_{\xi'>0}\;.
\end{equation}
By \eqref{alpha}, $K(\xi,\xi')$ is a probability kernel, i.e., $\int\!\rmd\xi'\;K(\xi,\xi') = 1$. By \eqref{eqpsi}, for any $\xi>0$,
\begin{equation*}
\bar\psi(\xi) = \int_0^{+\infty}\!\rmd\xi'\;\rme^{-\alpha(\xi-\xi')}K_\mathrm{o}(\xi,\xi') \bar\psi(\xi') + \int_{-1}^0\!\rmd\xi'\;\rme^{-\alpha(\xi-\xi')}K(\xi,\xi') \bar\psi(\xi') +  g(\xi)\;,
\end{equation*}
which implies, by iteration, 
\begin{equation}
\label{psin}
\begin{split}
	\bar\psi(\xi) & = \sum_{j=0}^{n-1} \int_0^{+\infty}\!\rmd\xi'\; \rme^{-\alpha(\xi-\xi')} K_\mathrm{o}^j(\xi,\xi') \int_{-1}^0\!\rmd\xi''\; \rme^{-\alpha(\xi'-\xi'')}K(\xi',\xi'') \bar\psi(\xi'') \\ & \quad + \int_0^{+\infty}\!\rmd\xi'\; \rme^{-\alpha(\xi-\xi')} K_\mathrm{o}^n(\xi,\xi') \bar\psi(\xi')  \\ & \quad + \sum_{j=0}^{n-1} \int_0^{+\infty}\!\rmd\xi'\; \rme^{-\alpha(\xi-\xi')} K_\mathrm{o}^j(\xi,\xi') g(\xi')  \;, \qquad \forall\, \xi\ge 0 \quad \forall\, n\in\bb N\;.
\end{split}
\end{equation}
Above, the iterated kernel $K_\mathrm{o}^j(\xi,\xi')$ is recursively defined by $K_\mathrm{o}^0(\xi,\xi') = \delta(\xi-\xi')$, $K_\mathrm{o}^1(\xi,\xi')  =  K_\mathrm{o}(\xi,\xi')$, and $K_\mathrm{o}^j(\xi,\xi') = \int\!\rmd\xi'' K_\mathrm{o}(\xi,\xi'') K_\mathrm{o}^{j-1}(\xi'',\xi')$ for $j>1$.
    
Since $\bar\psi$ and $g$ are bounded functions and $\int\!\rmd\xi\;\tilde J(\xi) = 1$, the $j$-th terms of the sums in the right-hand side of \eqref{psin} are bounded by a constant multiple of $p^j$, and the term in the middle line by a constant multiple of $p^n$. As $p<1$ we thus have, letting $n\to \infty$ in \eqref{psin}, 
\begin{equation}
\label{green1}
\rme^{\alpha\xi}\bar\psi(\xi) = \int_{-1}^0\!\rmd\xi'\;  \pi(\xi,\xi') \rme^{\alpha\xi'} \bar\psi(\xi') + G(\xi)\;, \quad \xi>0\;,
\end{equation}
where both the series
\begin{equation*}
\pi(\xi,\xi') := \sum_{j=0}^\infty \int_0^{+\infty}\!\rmd\xi''\; K_\mathrm{o}^j(\xi,\xi'') K(\xi'',\xi') 
\end{equation*}
and 
\begin{equation*}
G(\xi) :=  \sum_{j=0}^\infty \int_0^{+\infty}\!\rmd\xi'\; K_\mathrm{o}^j(\xi,\xi')  \rme^{\alpha\xi'}g(\xi')
\end{equation*}
converge.

The Green function $\pi(\xi,\xi')$, $\xi> 0$, $\xi'\in [-1,0]$, can be interpreted as a probability kernel because it is non negative and
\begin{equation}
\label{pi1}
\int_{-1}^0\!\rmd\xi'\; \pi(\xi,\xi') = 1 \quad \forall\, \xi>0\;.
\end{equation}
Moreover, there exists a probability density $\varrho(\xi)$, $\xi\in [-1,0]$, so that, for any function $\varphi\in C( [-1,0])$,
\begin{equation}
\label{pi2}
\lim_{\xi\to+\infty}	\int_{-1}^0\!\rmd\xi'\; \pi(\xi,\xi') \varphi(\xi') =	\int_{-1}^0\!\rmd\xi'\; \varrho(\xi') \varphi(\xi')\;.
\end{equation}
We omit the proof of \eqref{pi1} and \eqref{pi2} which are a special case of \cite[Eq.(3.69) and Lemma 3.7]{DGP}. 

Since $\bar\psi$ is a continuous function, from \eqref{pi2} the integral in the right-hand side of  \eqref{green1} converges to $\int_{-1}^0\!\rmd\xi'\; \varrho(\xi') \rme^{\alpha\xi'} \bar\psi(\xi')$ as $\xi\to +\infty$. Therefore, to prove \eqref{psi2} it remains to show that 
\begin{equation}
\label{G1}
\sup_{\xi> 0}\frac{\big|G(\xi)\big|}{1+\xi}<+\infty\;.
\end{equation}
By \eqref{gst} there is $C>0$ such that
\begin{equation*}
|G(\xi)| \le C + C \sum_{n=1}^\infty \int_0^{+\infty}\!\rmd\xi'\; K_\mathrm{o}^n(\xi,\xi')\;.
\end{equation*}
To estimate the $n$-th term of the sum in the right-hand side the key observation is taken from the proof of \cite[Eq.(3.69)]{DGP}. We have,
\begin{equation*}
\begin{split}
	\int_0^{+\infty}\!\rmd\xi'\; K_\mathrm{o}^n(\xi,\xi') & \le \int\!\rmd\xi_1 \cdots\rmd\xi_n\; K(\xi,\xi_1)\cdots K(\xi_{n-1},\xi_n)\, \id_{\xi_n>0} =: I_n\;.
\end{split}
\end{equation*}
Since the probability kernel $K(\xi,\xi')$ depends only on the difference $\xi'-\xi$, see \eqref{K}, the multiple integral $I_n$ can be viewed as an expectation with respect to $n$ i.i.d.\ random variables $Y_j=\xi_j-\xi_{j-1}$, $j=1,\ldots, n$, where $\xi_0:=\xi$, each one with the distribution of $\xi'-\xi$ as given by $K(\xi,\xi')\,\rmd\xi'$. More precisely, as $\{\xi_n >0\} = \{ (\xi_n-\xi_{n-1}) +  (\xi_{n-1}-\xi_{n-2}) + \cdots + (\xi_1-\xi) > -\xi \}$ and observing that, in view of \eqref{K}, $\bb E(Y_1) = \int\!\rmd\xi'\; K(\xi,\xi')(\xi'-\xi) = C_1 < 0$,
\begin{equation*}
\begin{split}
	I_n & = \bb P\bigg(\sum_{j=1}^nY_j> -\xi\bigg)	= \bb P\bigg( \sum_{j=1}^n (Y_j-C_1) > n |C_1| -\xi \bigg)\;.
\end{split}
\end{equation*}
If $n\le 2\xi/|C_1|$ we use the obvious estimate $I_n\le 1$, while for $n> 2\xi/|C_1|$,  by Chebyshev's inequality,
\begin{equation*}
I_n \le 	\bb P\bigg( \sum_{j=1}^n (Y_j-C_1) > \frac n2 |C_1| \bigg) \le \frac{16}{C_1^4n^4}\bb E\bigg[\big( \sum_{j=1}^n (Y_j-C_1) \big)^4\bigg] \le\frac{16C_4 n + 96 C_2 n^2}{C_1^4n^4}\;,
\end{equation*} 
where $C_2 := \bb E\big[(Y_1 -C_1)^2\big] $ and  $C_4 := \bb E\big[(Y_1 -C_1)^4\big] $. We conclude that 
\begin{equation*}
|G(\xi)| \le C + C \sum_{n\ge 1}^\infty I_n \le C +\frac{2 C}{|C_1|} \xi + \sum_{n>0}\frac{16C_4 n + 96 C_2 n^2}{C_1^4n^4}\;,
\end{equation*}
from which \eqref{G1} follows.
\end{proof}

\medskip
\begin{proof}[Proof of Eq.\eqref{Htilde}]
We write $H_\eps=\mc H_\eps+\mc K_\eps$ with $\mc H_\eps$ solving the linear equation,
\begin{equation}
	\label{mcH}
	- \nabla \cdot (\phi_\eps (1-\phi_\eps) \nabla \mc H_\eps) +\frac{B(\phi_\eps)+D(\phi_\eps)}{\eps^2} \mc H_\eps =  \partial_t \phi_\eps -\frac12 \Delta \phi_\eps - \frac{B(\phi_\eps)-D(\phi_\eps)}{\eps^2}\;,
\end{equation}
and therefore $\mc K_\eps$ satisfying
\begin{equation}
	\label{mcK}
	- \nabla \cdot (\phi_\eps (1-\phi_\eps) \nabla \mc K_\eps) +\frac{B(\phi_\eps)  (\rme^{\mc H_\eps+\mc K_\eps} -1 - \mc H_\eps)  -D(\phi_\eps) (\rme^{-\mc H_\eps-\mc K_\eps} -1 + \mc H_\eps)}{\eps^2} =  0  \; .
\end{equation}
Since $B(\phi_\eps)+D(\phi_\eps)$ and $\phi_\eps(1-\phi_\eps)$ are strictly positive, the first equation has a unique solution in $L^2(\bb T^d)$, which is a smooth function of $(t,x)$ by the smoothness of $\phi_\eps$ and elliptic regularity. In addition, in view of the expansions \eqref{expsu} and \eqref{expsu1} (the latter for $H=0$), the right-hand side in \eqref{mcH} is $O(\eps^{-1})$, so that the maximum principle yields $\mc H_\eps = O(\eps)$.

Combining \eqref{nuova} with \eqref{mcH}, by \eqref{expsu} and \eqref{expsu1} (the latter for $H=0$),
\begin{equation}
	\label{hk1}
	- \nabla \cdot (\phi_\eps (1-\phi_\eps) \nabla (\mc H_\eps-\eps H_1)) +\frac{B(\phi_\eps)+D(\phi_\eps)}{\eps^2} (\mc H_\eps-\eps H_1) = O(1)\;,
\end{equation}
hence the maximum principle yields $\mc H_\eps-\eps H_1 = O(\eps^2)$.

Next, using \eqref{mcK} we show that $\mc K_\eps = O(\eps^2)$, whence $\tilde H_\eps = \eps^{-2} \mc K_\eps + \eps^{-2}(\mc H_\eps-\eps H_1) = O(1)$. Indeed, recalling that $B(\phi_\eps)$ and  $D(\phi_\eps)$ are strictly positive and noticing that the non-linear term in \eqref{mcK} is an increasing function of $\mc K_\eps$, by comparison principle it is enough to construct super- and sub-solutions in the form $\mc K_\eps^\pm = \pm C\eps^2$, where $C$ is a (suitably large) positive constant. This simply follows from the elementary inequality $1 - \mc H - \rme^{C\eps^2-\mc H} >0$, which holds if $|\mc H| \le C_1\eps$ for given $C_1>0$, provided $C$ is large enough and $\eps$ is small enough. 

We are left with the estimate on the gradient of $\tilde H_\eps$. To this end, we first notice that, as $\mc H_\eps = O(\eps)$ and $\mc K_\eps = O(\eps^2)$, Eq.\eqref{mcK} can be recast in the form,
\begin{equation}
	\label{hk2}
	- \nabla \cdot (\phi_\eps (1-\phi_\eps) \nabla \mc K_\eps +\frac{B(\phi_\eps)+D(\phi_\eps)}{\eps^2} \mc K_\eps = O(1)\;.
\end{equation}
By \eqref{hk1} and \eqref{hk2} we get,
\begin{equation}
	\label{hk3}
	- \nabla \cdot (\phi_\eps (1-\phi_\eps) \nabla \tilde H_\eps +\frac{B(\phi_\eps)+D(\phi_\eps)}{\eps^2} \tilde H_\eps = O(\eps^{-2})\;.
\end{equation}
As $H_\eps = O(1)$, the estimate on $\eps|\nabla \tilde H_\eps|$ follows by a standard covering argument with balls of radius $\eps$ and applying elliptic regularity.	
\end{proof}

\medskip

\subsection*{Acknowledgements}
Section \ref{sec:4} is our attempt to answer questions raised by Claudio Landim.

\end{document}